\newcommand{\etal}{et~al.}
\DeclareFontFamily{OT1}{pzc}{}
\DeclareFontShape{OT1}{pzc}{m}{it}{<-> s * [1.10] pzcmi7t}{}
\DeclareMathAlphabet{\mathpzc}{OT1}{pzc}{m}{it}
\definecolor{fuchsiapink}{rgb}{1.0, 0.47, 1.0}
\definecolor{removered}{rgb}{0.7, 0.0, 0.0}
\newcommand{\ceil}[1]{\ensuremath{\lceil #1 \rceil}}
\newcommand{\Oh}[1]{\ensuremath{\mathcal{O}(#1)}\xspace}
\newcommand{\neighbors}{\ensuremath{\mathrm{\Gamma}}}%
\newcommand{\incnets}{\ensuremath{\mathrm{I}}}%
\newcommand{\pinsinpart}{\ensuremath{\mathrm{\Phi}}}
\newcommand{\adjblocks}{\ensuremath{\mathrm{B}}}
\newcommand{\con}{\ensuremath{\lambda}}
\newcommand{\conset}{\ensuremath{\Lambda}}
\newcommand{\frompart}{\ensuremath{V_{\text{\upshape from}}}}
\newcommand{\topart}{\ensuremath{V_{\text{\upshape to}}}}
\newcommand{\splitatcommas}[1]{%
  \begingroup
  \begingroup\lccode`~=`, \lowercase{\endgroup
    \edef~{\mathchar\the\mathcode`, \penalty0 \noexpand\hspace{0pt plus 1em}}%
  }\mathcode`,="8000 #1%
  \endgroup
}
\newcommand{\Partition}{\ensuremath{\mathrm{\Pi}}}%
\newcommand{\ocut}{\ensuremath{\mathfrak{f}_c(\Partition)}}%
\newcommand{\ocon}{\ensuremath{\mathfrak{f}_\lambda(\Partition)}}%
\newcommand{\maxsize}[1]{\ensuremath{\Delta_{#1}}}
\newcommand{\meddeg}{\ensuremath{\tilde{d}(v)}}
\newcommand{\medsize}{\ensuremath{|\tilde{e}|}}
\newcommand{\plusplus}{\texttt{++}}
\newcommand{\Cpp}[1]{C\plusplus#1}
\newcommand{\gpp}[1]{g\plusplus#1}
\newif\ifpdfplots
\newcommand{%
  \ifpdfplots
    \includegraphics{pdf_plots/.pdf}
  \else
    \tikzsetnextfilename{pdf_plots/}%
    \input{tikz_plots/}%
  \fi
}[1]{%
  \ifpdfplots
    \includegraphics{pdf_plots/#1.pdf}
  \else
    \tikzsetnextfilename{pdf_plots/#1}%
    \input{tikz_plots/#1}%
  \fi
}
  \providecommand\BibTeX{{%
    \normalfont B\kern-0.5em{\scshape i\kern-0.25em b}\kern-0.8em\TeX}}}
\begin{document}

\title{High-Quality Hypergraph Partitioning}

\author{Sebastian Schlag}
\orcid{0000-0003-1550-882X}
\affiliation{%
  \institution{Karlsruhe Institute of Technology}
  \department{Institute of Theoretical Informatics}
  \streetaddress{Postfach 6980}
  \postcode{76128}
  \city{Karlsruhe}
  \country{Germany}
}
\email{research@sebastianschlag.de}

\author{Tobias Heuer}
\affiliation{%
  \institution{Karlsruhe Institute of Technology}
  }
\email{tobias.heuer@kit.edu}

\author{Lars Gottesbüren}
\affiliation{%
  \institution{Karlsruhe Institute of Technology}
}
\email{lars.gottesbueren@kit.edu}

\author{Yaroslav Akhremtsev}
\affiliation{%
  \institution{Karlsruhe Institute of Technology}
  }
\email{yaroslav.akhremtsev@kit.edu}

\author{Christian Schulz}
\orcid{0000-0002-2823-3506}
\affiliation{%
    \institution{ Heidelberg University}
  \department{Faculty for Mathematics and Computer Science}
  \postcode{69120}
  \city{Heidelberg}
  \country{Germany}
}
\email{christian.schulz@informatik.uni-heidelberg.de}

\author{Peter Sanders}
\affiliation{%
  \institution{Karlsruhe Institute of Technology}
}
\email{sanders@kit.edu}


\begin{abstract}
Hypergraphs are a generalization of graphs where edges (aka
\emph{nets}) are allowed to connect more than two vertices. They have
a similarly wide range of applications as graphs.  This paper
considers the fundamental and intensively studied problem of
\emph{balanced hypergraph partitioning}, which asks for partitioning
the vertices into $k$ disjoint blocks of bounded size while minimizing
an objective function over the hyperedges. Here, we consider the two
most commonly used objectives: the \emph{cut-net metric} and the
\emph{connectivity metric}.

We describe our open source hypergraph partitioner \emph{KaHyPar} which is based
on the successful multi-level approach -- driving it to the extreme of
using one level for (almost) every vertex. Using carefully designed
data structures and dynamic update techniques, this approach turns out
to have a very good time--quality tradeoff. We present two
preprocessing techniques -- \emph{pin sparsification using locality
  sensitive hashing} and \emph{community detection based on the
  Louvain algorithm}. The community structure is used to guide the \emph{coarsening
process} that incrementally contracts vertices.  \emph{Portfolio-based
  partitioning} of the contracted hypergraph then already achieves a good
initial solution.  While reversing the contraction process, a
combination of several refinement techniques achieves a good final
partitioning. In particular, we support a \emph{highly-localized local
  search} that can directly produce a $k$-way partitioning and
complement this with \emph{flow-based techniques} that take a more
global view. Optionally, a \emph{memetic algorithm} evolves a pool of
solution candidates to an overall good solution.

We evaluate KaHyPar for a large set of instances from a wide range of
application domains.  With respect to quality, KaHyPar outperforms all previously
considered systems that can handle large hypergraphs such as hMETIS,
PaToH, Mondriaan, or Zoltan. Somewhat surprisingly, to some extend, this
even extends to graph partitioners such as KaHIP when considering the
special case of graphs. KaHyPar is also faster than most of these
systems except for PaToH which represents a different speed--quality
tradeoff.
\end{abstract}

 \begin{CCSXML}
<ccs2012>
<concept>
<concept_id>10002950.10003624.10003633.10003637</concept_id>
<concept_desc>Mathematics of computing~Hypergraphs</concept_desc>
<concept_significance>500</concept_significance>
</concept>
<concept>
<concept_id>10003752.10003809.10003635</concept_id>
<concept_desc>Theory of computation~Graph algorithms analysis</concept_desc>
<concept_significance>300</concept_significance>
</concept>
<concept>
<concept_id>10003752.10003809.10003635.10010038</concept_id>
<concept_desc>Theory of computation~Dynamic graph algorithms</concept_desc>
<concept_significance>100</concept_significance>
</concept>
</ccs2012>
\end{CCSXML}

\ccsdesc[500]{Mathematics of computing~Hypergraphs}
\ccsdesc[300]{Theory of computation~Graph algorithms analysis}
\ccsdesc[100]{Theory of computation~Dynamic graph algorithms}
\keywords{partitioning, multilevel algorithm, memetic algorithm, community detection, portfolio, maximum flows}

\maketitle
\small{\textbf{Supplementary Material:} The source code and experimental data are available online at \url{https://github.com/kahypar/kahypar} and \url{https://github.com/kahypar/experimental-results/tree/master/jea20}.}
\vfill\pagebreak
\section{Introduction}

\emph{Graphs} are a well-known and universal abstraction that models
objects (vertices) and their relations (edges). \emph{Hypergraphs} are
a slightly less well-known but similarly useful generalization where a
\emph{hyperedge} or \emph{net} may connect more than two vertices. A
fundamental problem in processing graphs and hypergraphs is to
partition the vertices into a specified number $k$ of disjoint
\emph{blocks} such that each block has bounded size and such that few nets connect vertices from more
than one block.  We consider the two most widely used objective
functions: The \emph{cut-net metric} counts the number of nets that
connect vertices from more than one block.  The \emph{connectivity
  metric} weights each net $e$ with a factor $\lambda(e)-1$, where
$\lambda(e)$ is the number of different blocks connected through
$e$. Moreover, we also consider generalizations with weighted vertices
and nets.

This \emph{balanced hypergraph partitioning (BHP)} problem has numerous
applications. Perhaps the
most well-known applications are in VLSI design \cite{DAlpert} (see also Figure~\ref{fig:hgp_example}) and in
minimizing communication costs when processing graphs or sparse
matrices in parallel \cite{DBLP:journals/tpds/CatalyurekA99,DBLP:journals/siamsc/UcarA04}.  Further applications include
storage sharding in distributed databases \cite{DBLP:journals/pvldb/CurinoZJM10,Kumar2014,DBLP:journals/pvldb/KabiljoKPPSAP17}, social network
analysis \cite{DBLP:journals/sigmetrics/HeintzC14,8790188,HyperXGP}, warehouse management, and route planning~\cite{delling_et_al:OASIcs:2017:7896}.

Since the BHP problem is NP-hard and hard to approximate
\cite{Lengauer:1990,DBLP:journals/ipl/BuiJ92}, solving it for large hypergraphs requires heuristics.  In
partitioners that achieve high quality, the \emph{multi-level
  approach}, which is illustrated in Figure~\ref{fig:multilevel_example}, takes a central role: The input is first
\emph{coarsened} by successively identifying sets of vertices that are
to be \emph{contracted}, i.e., these sets are replaced by a single
vertex in the new hypergraph. Nets that contain several of the
contracted vertices get correspondingly smaller, or can be removed entirely if
they shrink to size one. Parallel nets can be replaced by a single net
with correspondingly larger weight.  Note that any feasible partition
of the contracted hypergraph induces a feasible partition of the
original input with the same objective function value. Hence, if we
carefully contract only vertex sets that should be in the same block of
a good partition, we already get good solutions by partitioning the
hypergraph in any of its coarser representations.  In particular,
\emph{initial partitioning} can apply expensive algorithms if the
hypergraph has been shrunk to sufficiently small size.  Mistakes made due
to vertex sets that should not have been contracted can be straightened
out in an \emph{uncoarsening and refinement} phase where the
contractions are undone and optimization approaches like local search
are applied to further improve the solution.

\begin{figure}
  \centering
  \includegraphics[]{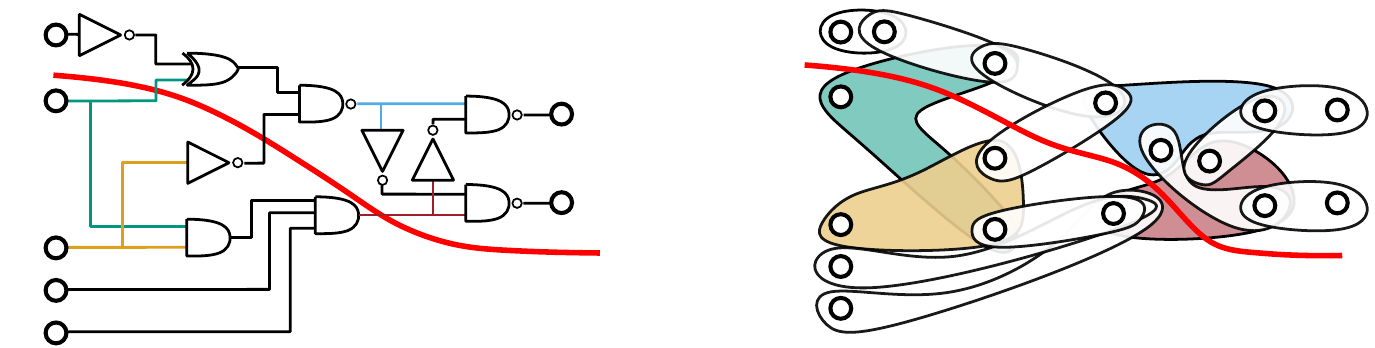}
  \caption{A circuit represented as a hypergraph together with a bipartition cutting three nets.}\label{fig:hgp_example}
\end{figure}

\begin{figure}
  \centering
  \includegraphics[width=\textwidth]{}
  \caption{Illustration of several preprocessing techniques and the multi-level partitioning process.}\label{fig:multilevel_example}
\end{figure}

\subsection{Our Contributions}
The multi-level framework can be instantiated in many ways. We
contribute to all three main components (coarsening, initial
partitioning, refinement), to pre- and postprocessing, and to techniques
for experimental evaluation.  Describing all this in minute detail
requires more space than appropriate for a single journal
paper. Hence, we restrict ourselves on conveying the main ideas on a
high level and how they work together. More details can be found
primarily in the doctoral thesis of Sebastian Schlag \cite{Schlag20}%
\footnote{Text from this thesis was also used as the starting for the descriptions made here so that some short verbatim passages may stem from there.}
  and,
to a lesser extend, in the doctoral thesis of Yaroslav Akhremtsev (pin
sparsifier) \cite{ThesisYaroslav}, in the Bachelor theses of Tobias
Heuer (initial partitioning) and Robin Andre (the memetic algorithm)
\cite{AndreBA}, and in the Master thesis of Tobias Heuer (flows)
\cite{HeuerMA}.
Lars Gottesb\"uren contributed the HyperFlowCutter approach \cite{flowcutter, hfc_esa, whfc_sea}.
Preliminary descriptions of some of these results have also been
published in several conference papers
\cite{KaHyPar-R,KaHyPar-K,KaHyPar-E,KaHyPar-MF-SEA,KaHyPar-CA}.  For
one variant of the flow-based techniques, there is also a journal
paper explaining it in detail \cite{KaHyPar-MF-JEA}.  We see the added
value of our paper in giving a succinct presentation of the overall
system and how it leads to the currently best code for high-quality
hypergraph partitioning. Among others, the partitioning system described in this
paper is used for circuit partitioning~\cite{DBLP:conf/iccad/NetoATATG19}, in a graph-based algebraic modeling framework~\cite{Jalving2019}, as well as for finding high-quality contraction paths for tensor networks~\cite{gray2021hyper} for quantum circuit simulation, which is also used in a different framework~\cite{huang2020classical}.

After introducing notation and basic techniques in
Section~\ref{s:prelim}, Section~\ref{s:algorithm} explains our
algorithmic approaches.  KaHyPar represents hypergraphs as a bipartite
graph where vertices correspond to one side of the graph and nets to
the other side. The dynamic data structure described in
Section~\ref{sec:hypergraph_ds} is carefully designed to support
efficient (un)contractions.  KaHyPar supports both direct $k$-way
partitioning and recursive bipartitioning.
Section~\ref{sec:rb_vs_kway} explains this tradeoff.

Before starting the
multi-level optimization proper, KaHyPar offers two kinds of
preprocessing. A \emph{pin sparsifier} described in
Section~\ref{ssec:sparsification} contracts vertices with similar
neighborhood. This accelerates subsequent computations by reducing the
average net size. Vertices with similar neighborhood are identified
using locality-sensitive min-hashing schemes. Quality is further improved by
\emph{identifying communities} of densely connected vertices. The
approach described in Section~\ref{ssec:community_detection} applies
the Louvain algorithm for modularity clustering \cite{Louvain} on the bipartite graph
representation.  In order to derive useful communities of the hypergraph
from these clusters, appropriate weights are defined for the graph
edges.

The community structure later guides the \emph{coarsening
  phase} described in Section~\ref{sec:coarsening} by only allowing contractions inside clusters.
However, rather than directly contracting clusters, KaHyPar adapts the \emph{$n$-level
  approach} previously used by the KaSPar graph partitioner
\cite{DBLP:conf/esa/OsipovS10}. The idea is to contract only two
vertices at a time in order to obtain very fine-grained hierarchy
information that leads to improved solutions and smaller search spaces
in the refinement phase.

The initial partitioning algorithm described in
Section~\ref{sec:initial_partitioning} is based on recursive
bipartitioning and was originally developed in the Bachelor thesis of
Tobias Heuer \cite{HeuerBA}.  The algorithm recursively and repeatedly
applies a portfolio of several simple bipartitioning algorithms. These
include random initialization, BFS, local search improvement, greedy
hypergraph growing \cite{DBLP:journals/tpds/CatalyurekA99}, and an
adaption of label propagation
\cite{DBLP:journals/heuristics/MeyerhenkeS016,HenneMA} to
hypergraphs. Direct $k$-way algorithms that we also tried did not work
sufficiently well to include them into the portfolio. This is somewhat
surprising because in the refinement phase described in
Section~\ref{sec:refinement}, direct $k$-way algorithms are superior.
These local search algorithms are highly localized because they
explore the search space around a single pair of uncontracted
vertices. This makes it more likely to find nontrivial overall
improvements as already observed for graph partitioning
\cite{kaffpa,DBLP:phd/dnb/Schulz13a,DBLP:conf/esa/OsipovS10}. The
added difficulty in hypergraph partitioning is that moving a single
vertex between two blocks may on the one hand have no effect on the
objective function value but may potentially affect the desirability
of moving many other vertices.  We address these issues by developing
an \emph{improved stopping rule} for these local searches and by
reducing update overhead after vertex moves by using a \emph{gain-cache} data structure.  This
local $k$-way view is complemented by a global 2-way view using
maximum-flow computations between pairs of blocks in
Section~\ref{ssec:flows}.  This generalizes an
approach from graph partitioning \cite{kaffpa} to hypergraphs. The
new formulation even helps for the graph case since it allows more vertex moves and uses a more general tradeoff between balance and cut size~\cite{KaHyPar-MF-JEA}.

All components of KaHyPar use randomized tie breaking whenever
possible. This has the effect to increase variance of the solution
quality thus allows improved solutions by simply taking the best
from several solution attempts. Even higher quality can be achieved by
evolving a population of solutions using a memetic algorithm. A
memetic algorithm is a genetic algorithm that also uses local
search. Our variant described in
Section~\ref{subs:memetic_algo} includes recombination operators with
more than two parents and ones that guarantee that the offspring is no
worse than the parents. ``Local search'' in this context has a very
coarse-grained meaning.  It means applying
\emph{V-cycles}, i.e., applying $n$-level coarsening and
refinement. KaHyPar's V-cycles restrict contractions to blocks of the
previous solution and use either the old or a newly computed solution as the initial partition.

Section~\ref{s:experiments} then reports the main results of an
extensive experimental evaluation using close to 4000 problem instances.  Considering the algorithmic
components of KaHyPar, the pin sparsifier yields speedups on some
difficult instances while leading to a negligible deterioration on
quality.  Conversely, community-aware coarsening leads to a
significant improvement of quality for a negligible time
overhead. Flows lead to an even larger quality improvement at the cost
of increasing running time by about a factor of two.  Evolutionary
techniques yield an addional (somewhat smaller) improvement at very
large expense (but are still more efficient than brute force
repetitions).

We also compare KaHyPar with a representative set of seven
state-of-the art other hypergraph partitioners (two variants of hMetis
\cite{hMETIS-Software}, two variants of PaToH \cite{PaToH-Software},
one variant of Zoltan \cite{ZoltanAlgD-Software}, Mondriaan
\cite{Mondriaan-Software}, and HYPE~\cite{HYPE-Software}).  KaHyPar
achieves the highest quality among all these systems while being
faster than the partitioner that previously achieved the highest
quality (hMetis).  The variants of PaToH are considerably faster than
KaHyPar at the expense of worse solutions (typically around 7 \%
larger objective function value).  The only non-multi-level system,
HYPE, is sometimes yet a bit faster but at the price of an order of magnitude
larger objective function values.
We also compare KaHyPar with the state-of-the art high-quality \emph{graph} partitioner
KaFFPa. Somewhat surprisingly, KaHyPar is both faster and achieves higher quality.
We summarize our results in
Section~\ref{s:conclusions} and outline possible directions for future
work.

\subsection{Related Work}

There is extensive previous work on graph and hypergraph partitioning.
For a detailed account, we therefore refer to existing survey
articles~\cite{buluc2016recent,GPOverviewBook,DAlpert,DPapa2007,DBLP:reference/bdt/0003S19} and the thesis of
Sebastian Schlag~\cite{Schlag20} which contains a 60 page literature
survey and 45 pages of bibliography. We cite directly relevant
literature within the respective sections.
Historically, many variants of local search have been investigated
since the 1960s.  Combining local search with two-level algorithms
allowed significant quality improvements since the mid 1970s.  Since
the late 1990s, multi-level algorithms have established themselves as
the method of choice for achieving high quality.

Well-known \emph{sequential} multi-level systems with certain distinguishing characteristics include
tools originating from scientific computing like PaToH~\cite{DBLP:journals/tpds/CatalyurekA99}, originating from VLSI design like hMetis~\cite{KarypisAKS99,DBLP:journals/vlsi/KarypisK00} and MLPart~\cite{DBLP:conf/aspdac/CaldwellKM00}, and systems targeted at partitioning sparse rectangular matrices like Mondriaan~\cite{DBLP:journals/siamrev/VastenhouwB05}.
UMPa~\cite{DBLP:conf/dimacs/CatalyurekDKU12} supports directed hypergraphs and multiple objective functions.
kPaToH~\cite{DBLP:journals/jpdc/AykanatCU08} can handle multiple constraints as well as fixed vertices.

\emph{Distributed} HGP systems include Zoltan~\cite{DBLP:conf/ipps/DevineBHBC06} and Parkway~\cite{DBLP:journals/jpdc/TrifunovicK08}, and SHP~\cite{DBLP:journals/pvldb/KabiljoKPPSAP17}.
However, these currently cannot achieve the quality of sequential multi-level partitioners.
Recently, \emph{shared-memory} partitioners have made progress -- see Section~\ref{s:conclusions}.

\section{Preliminaries}\label{s:prelim}
\subsection{Notation and Definitions}
\paragraph{Hypergraphs \& Graphs}
A \textit{weighted undirected hypergraph} $H=(V,E,c,\omega)$ is defined as a set of $n$ vertices $V$ and a
set of $m$ hyperedges/nets $E$ with vertex weights $c:V \rightarrow \mathbb{R}_{>0}$ and net
weights $\omega:E \rightarrow \mathbb{R}_{>0}$, where each net $e$ is a subset of the vertex set $V$ (i.e., $e \subseteq V$).
The vertices of a net are called \emph{pins}.
We extend $c$ and $\omega$ to sets in the natural way, i.e., $c(U) :=\sum_{v\in U} c(v)$ and $\omega(F) :=\sum_{e \in F} \omega(e)$.
A vertex $v$ is \textit{incident} to a net $e$ if $v \in e$. $\mathrm{I}(v)$ denotes the set of all incident nets of $v$.
The set $\neighbors(v) := \{ u~|~\exists e \in E : \{v,u\} \subseteq e\}$ denotes the neighbors of $v$.
The \textit{degree} of a vertex $v$ is $d(v) := |\mathrm{I}(v)|$.
 We assume hyperedges to be sets rather than multisets, i.e., a vertex can only be contained in a hyperedge \emph{once}.
Nets of size one are called \emph{single-vertex} nets. We call two nets  $e_i$ and $e_j$ \emph{parallel} if $e_i = e_j$.
Given a subset $V' \subset V$, the \emph{subhypergraph} $H_{V'}$ is defined as $H_{V'}:=(V', \{e \cap V'~|~e \in E : e \cap V' \neq \emptyset \})$.

Let $G=(V,E,c,\omega)$ be a weighted (directed) graph. We use \emph{vertices}
and \emph{hyperedges/nets} when referring to hypergraphs and \emph{nodes} and \emph{edges} when referring to graphs.
However, we use the same notation to refer to node weights $c$,
edge weights $\omega$, node degrees $d(v)$, and the set of neighbors $\neighbors$.
In an undirected graph, an edge $(u,v) \in E$ implies an edge $(v,u) \in E$ and $\omega(u,v) = \omega(v,u)$.

A common way to represent an undirected hypergraph $H=(V,E,c,\omega)$ as an undirected graph is the \textit{bipartite} representation~\cite{HuMoerder85}.
In the \textit{bipartite} graph $G_*(V \dot\cup E, F)$, the vertices and nets of $H$ form the node set and for each net $e$ incident to a vertex $v$, we
add an edge $(e,v)$ to $G_*$~\cite{DBLP:conf/dac/SchweikertK72}. The edge set $F$ is thus defined as $F := \{(e,v)~|~e \in E, v \in e \}$.
An example of a hypergraph, along with its corresponding bipartite graph is shown in Figure~\ref{fig:hg_cg_bg}.

\paragraph{Partitions and Clusterings}
A \emph{$k$-way partition} of a hypergraph $H$ is a partition of its vertex set into $k$ \emph{blocks} $\Partition = \{V_1, \dots, V_k\}$
such that $\bigcup_{i=1}^k V_i = V$, $V_i \neq \emptyset $ for $1 \leq i \leq k$, and $V_i \cap V_j = \emptyset$ for $i \neq j$.
We call a $k$-way partition $\Partition$ \emph{$\varepsilon$-balanced} if each block $V_i \in \Partition$ satisfies the \emph{balance constraint}:
$c(V_i) \leq L_{\max} := (1+\varepsilon)\lceil \frac{c(V)}{k} \rceil$ for some parameter $\mathrm{\varepsilon}$.%
\footnote{The $\lceil\cdot\rceil$ in this definition ensures that there is always a feasible solution for inputs with unit vertex weights. For general weighted inputs there is no commonly accepted way how to deal with feasibility; see also \cite{HMS21,GHSSS21}.}
We call a block $V_i$ \emph{overloaded} if $c(V_i) > L_{\max}$ and \emph{underloaded} if $c(V_i) < L_{\max}$.

For each net $e$, $\conset(e) := \{V_i~|~ V_i \cap e \neq \emptyset\}$ denotes the \emph{connectivity set} of $e$.
The \emph{connectivity} $\con(e)$ of a net $e$ is the cardinality of its connectivity set, i.e.,  $\con(e) := |\conset(e)|$.
A net is called a \emph{cut net} if $\con(e) > 1$, otherwise  (i.e., if $|\mathrm{\lambda}(e)|=1$ ) it is called an \emph{internal} net.
A vertex $u$ that is incident to at least one cut net is called a  \emph{border vertex}.
The number of pins of a net $e$ in block $V_i$ is defined as  $\pinsinpart(e,V_i) := |\{V_i \cap e \}|$.
A block $V_i$ is \emph{adjacent} to a vertex $v \notin V_i$ if $\exists e \in  \incnets(v) : V_i \in \conset(e)$.
We use $\adjblocks(v)$ to denote the set of all blocks adjacent to $v$.
Given a $k$-way partition $\Partition$ of $H$, the \emph{quotient graph}
$Q := (\Partition, \{(V_i,V_j)~|~\exists e \in E : \{V_i,V_j\} \subseteq  \conset(e)\})$ contains an edge between each pair of adjacent blocks.

A \emph{clustering} $C = \{C_1, \dots, C_l\}$ of a hypergraph is a partition of its vertex set. In contrast to a $k$-way partition,
the number of clusters is not given in advance, and there is no balance constraint on the actual sizes of the clusters $C_i$.

\paragraph{Contractions and Uncontractions}
\emph{Contracting} a pair of vertices $(u, v)$ means merging $v$ into $u$. We refer to $u$ as the \emph{representative}
and to $v$ as the \emph{contraction partner}.
After contraction, the weight of $u$ becomes $c(u) := c(u) + c(v)$.  We connect $u$ to the former neighbors $\neighbors(v)$ of $v$, by replacing
$v$ with $u$ in all nets $e \in \incnets(v) \setminus \incnets(u)$. Furthermore, we remove $v$ from all nets $e \in \incnets(u) \cap \incnets(v)$.
If a contraction leads to parallel nets, we remove all but one from $H$. The weight of the remaining net $e$ is set to the sum of the weights of the nets parallel to $e$.
Single-vertex nets created by a contraction are removed from the hypergraph, since such nets can never become part of the cut set.
\emph{Uncontracting} a vertex $u$ reverses the contraction. The uncontracted vertex $v$ is put in the same
block as $u$ and the weight of $u$ is set back to $c(u) := c(u) - c(v)$.

\begin{figure}[t!]
  \centering
  \includegraphics[]{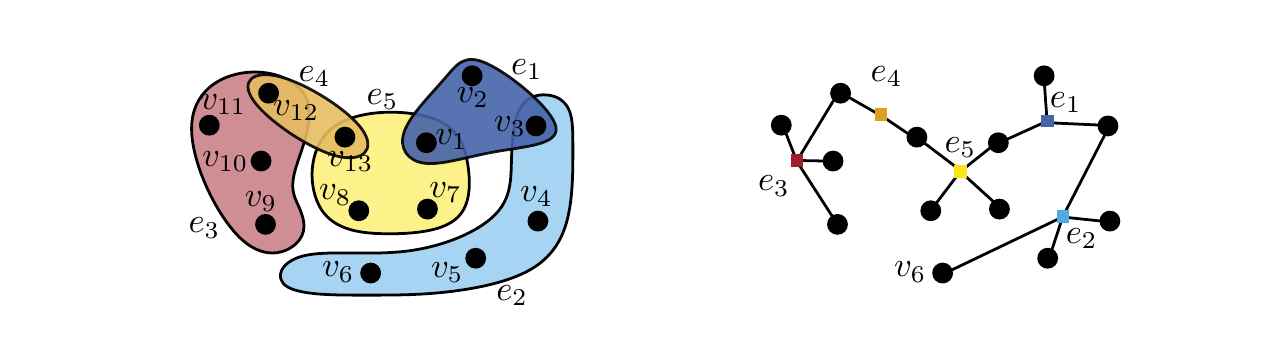}
  \caption[A hypergraph and its bipartite and clique-net representation.]{A hypergraph with $13$ vertices, $5$ nets, and $17$ pins (left)
    and the bipartite representation (right).}\label{fig:hg_cg_bg}
\end{figure}

\subsection{The \texorpdfstring{$k$}{k}-way  Hypergraph Partitioning Problem}\label{sec:hgp}
\paragraph{Problem Definition}
 The \emph{$k$-way hypergraph partitioning problem} is to find an $\varepsilon$-balanced $k$-way partition $\mathrm{\Pi}$ of a hypergraph $H=(V,E,c,\omega)$ that
 \emph{minimizes} an objective function over the cut nets for some value of~$\varepsilon$.
We consider the two most commonly used cost functions, namely the \emph{cut-net} metric $\ocut := \sum_{e \in E'} \omega(e)$ and the
\emph{connectivity} metric $\ocon := \sum_{e\in E'} (\lambda(e) -1)~\omega(e)$, where $E'$ is the \emph{cut-set} (i.e., the set of all cut nets)~\cite{donath1988logic,DBLP:journals/jpdc/DeveciKUC15}.
While the cut-net metric sums the weights of all nets that connect more than one block of the partition $\Partition$, the connectivity
metric additionally takes into account the actual number $\con$ of blocks connected by the cut nets.
Optimizing each of the objective functions is known to be NP-hard~\cite{Lengauer:1990}.
Note that for plain graphs both cost functions revert to edge-cut (i.e., the sum of the weights of those edges that have endpoints in different blocks) \cite{buluc2016recent}.
Furthermore, for $k=2$ (the \emph{bipartitioning} problem), $\ocut = \ocon$.

\paragraph{Recursive Bipartitioning and Direct $k$-way Partitioning}
In general, there are two approaches to computing a $k$-way partition of a hypergraph. If $k$ is a power of two, \emph{recursive bipartitioning} (RB)
algorithms obtain the final $k$-way partition by first computing a bipartition of the initial hypergraph, and then recursing on each of the two blocks.
Thus, it takes $\log k$ such phases%
\footnote{In this paper $\log x$ stands for $\log _2x$.} until the hypergraph is partitioned into $k$ blocks. If $k$ is not a power of two, the approach has to be adapted to
produce appropriately-sized blocks.
In \emph{direct $k$-way partitioning}, the hypergraph is directly partitioned into $k$ blocks, without the detour over the recursive $2$-way approach.

\section{High-Quality Hypergraph Partitioning}\label{s:algorithm}
This section presents the core data structures and algorithms of KaHyPar.
We start by describing our ``semi-dynamic'' hypergraph data structure in Section~\ref{sec:hypergraph_ds}. It is semi-dynamic in that
we are only concerned with efficient vertex and hyperedge \emph{deletions} and the reversal of these operations, and do not consider insertions of additional vertices or nets. In Section~\ref{sec:rb_vs_kway}, we briefly discuss our approach to computing $k$-way partitions via recursive bipartitioning and the peculiarities that need to be addressed for cut-net and connectivity optimization. Section~\ref{sec:preprocessing}
then presents the employed preprocessing techniques, namely the LSH-based sparsification algorithm and
an approach to infer information about the community structure of the hypergraph.
While the former is used to speed up the overall partitioning process, the information gathered by the latter is used to guide the coarsening process.
Afterwards, we address each of the three phases of the multi-level paradigm: Section~\ref{sec:coarsening} presents our coarsening algorithm, Section~\ref{sec:initial_partitioning} discusses our portfolio-based initial partitioning algorithm, and Section~\ref{sec:refinement} and Section~\ref{ssec:flows} give an overview on
our localized local search and flow-based refinement algorithms.
Lastly, in Section~\ref{subs:memetic_algo}, we integrate the entire framework with a genetic algorithm that is able to explore the global solution space extensively.

\subsection{The Hypergraph Data Structure}\label{sec:hypergraph_ds}

\begin{figure}[t]
  \centering
  \includegraphics[scale=0.9]{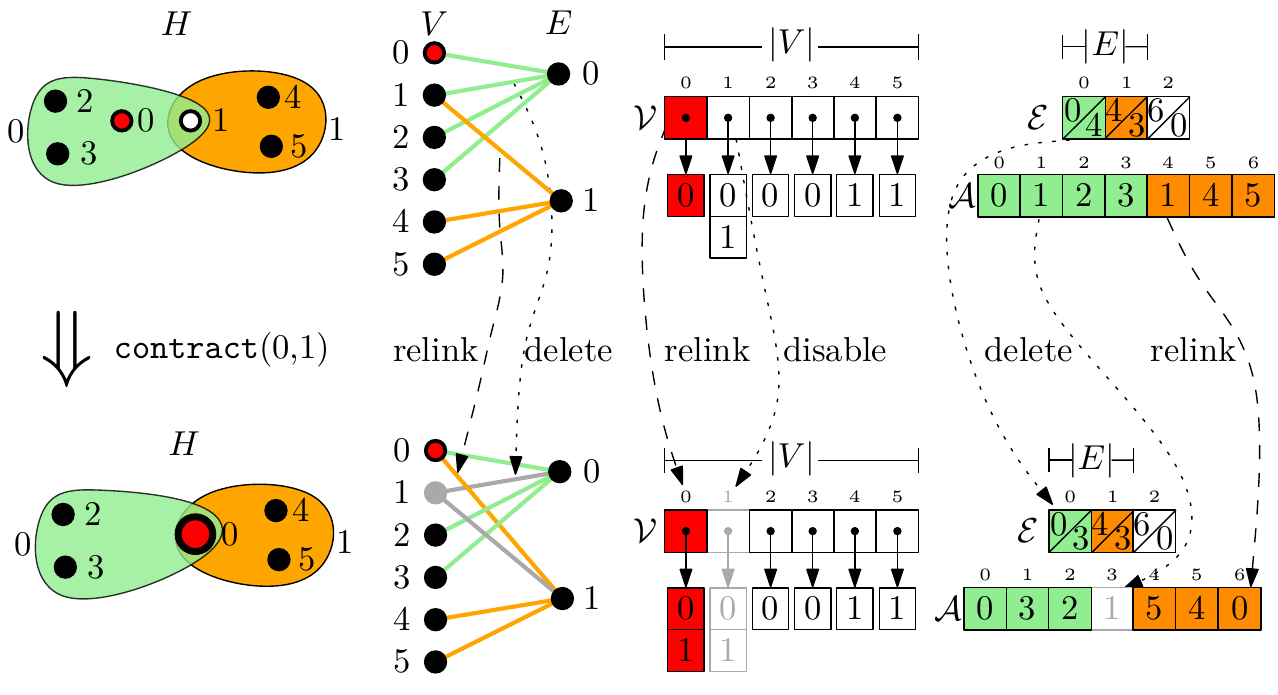}
  \caption[Example of a contraction operation.]{Example of a contraction operation. The hypergraph $H$ is depicted on the left, the corresponding bipartite graph representation is shown
    in the middle, and the adjacency data structure is shown on the right. The contraction leads to an edge deletion operation for net $0$ and
    a relink operation for net $1$. Element $\mathcal{E}[2]$ is a sentinel element used during uncontractions.}\label{fig:graphds}
\end{figure}

\paragraph{Conceptual Overview}
Conceptually, we represent the hypergraph $H$ as an undirected \emph{bipartite} graph $G_*=(V \dot \cup E, F)$. The vertices and nets
of $H$ form the vertex set. For each net $e$ incident to a vertex~$v$, we add an edge $(e,v)$ to the graph.
The edge set $F$ is thus defined as $F := \{(e,v)~|~e \in E \wedge v \in e \}$.
When contracting a vertex pair $(u,v)$, we mark the corresponding node $v$ as deleted. The edges $(v,e)$ incident to $v$ are treated as follows:
If $G_*$ already contains an edge $(u,e)$, then net $e$ contained both $u$ and $v$ before the contraction.
In this case, we simply delete the edge $(v,e)$ from $G_*$. Otherwise, net $e$ only contained $v$. We therefore have to
relink the edge $(v,e)$ to $u$.

\paragraph{Data Structure}
We use a combination of an adjacency list and a modified adjacency array to represent $G_*$.
An example is shown in Figure~\ref{fig:graphds}. The adjacency list is used to store
the incident nets of each vertex (i.e., the edges leaving nodes $v \in V$ in the bipartite graph representation), while the adjacency array stores the pins of each net (i.e.,
edges leaving nodes $v \in E$ in $G_*$). This representation is motivated by the observations that
vertex degrees can grow after a contraction, while nets can only shrink.
To index into the adjacency array $\mathcal{A}$, we use an offset array $\mathcal{E}$ that stores the starting positions of the entries in $\mathcal{A}$ ($\mathcal{E}[\cdot].f$)
and the size of each net $|e|$ ($\mathcal{E}[\cdot].s$).
Thus, pins of a net $e$ are accessible as $\mathcal{A}[\mathcal{E}[e].f], ...,  \mathcal{A}[\mathcal{E}[e].f+\mathcal{E}[e].s-1]$, while nets incident to a vertex $v$ are accessed using array $\mathcal{V}$ that stores a pointer for each vertex
to a vector containing the corresponding incident nets.

\paragraph{Contraction}
Contracting a vertex pair $(u,v) \in H$ works as follows: The weight of $v$ is added to the weight of $u$.
For each net $e \in \incnets(v)$ we then have to determine if the corresponding
edge $(v,e) \in G_*$ can simply be deleted or if a relink operation is necessary. This can be done with one scan over the pins of $e$. During
this scan, we swap $v$ with the last pin of $e$ located at position $\mathcal{A}[\mathcal{E}[e].f +\mathcal{E}[e].s-1]$, and additionally search for vertex $u$. If we find $u$, we can remove $v$ from $e$ by decrementing $E[e].s$.
If $u$ was not found, $(v,e)$ is relinked to $u$ by replacing $v$ with $u$ in the pin sequence of $e$ and by appending $e$ to the adjacency list of vertex $u$.
In order to be able to reverse contractions, we record each contracted vertex pair in a \emph{memento sequence $\mathcal{M}$}.

\paragraph{Uncontraction}

After 
 resetting the weight of the representative vertex $u$, we first mark all incident nets $\incnets(v)$ as relevant for the current uncontraction
using a bit vector $b$. We then iterate over all nets $e \in \incnets(u)$ of the representative $u$. If net $e$ is also incident
to the re-enabled vertex $v$ (i.e., $b[e] =~$true), it is necessary to revert either a delete or a relink operation. It is possible to distinguish between both cases by $\emph{peeking}$
one element past the slot of the last pin of $e$ ($E[e+1].f$ always exists, because we use a sentinel element at position $E[m]$). If the pin located at this position is $v$
and we are still in the pin range of net $e$ (i.e., the current size of $e$ is smaller than its original size), we have to revert a delete operation. Otherwise
a relink operation needs to be reverted. Deletions can be reversed by simply increasing $E[e].s$ for the corresponding nets.
To reverse a relink operation, we remove $e$ from the adjacency list of vertex $u$, and reset the pin slot of $e$ containing $u$ back to $v$.

\subsection{Computing \texorpdfstring{$k$}{k}-way Partitions via Recursive Bipartitioning}\label{sec:rb_vs_kway}

\paragraph{Motivation}
The question whether or not to prefer direct $k$-way partitioning over recursive bipartitioning is still
unresolved. Depending on the final application and algorithm, either one can be the method of choice. For example, recursive bipartitioning can be the method of choice if $k$ is not given in advance, but instead a bound on the block size is specified. In this case, recursive bipartitioning can stop as soon as the blocks reach the upper bound for the block size. Historically, recursive bipartitioning algorithms have been superior to $k$-way schemes for the hypergraph partitioning problem. Hence, our first partitioning algorithm~\cite{KaHyPar-R} used recursive bipartitioning to optimize the cut-net metric. Afterwards, we contributed a
direct $k$-way partitioning approach optimizing both the cut-net~\cite{KaHyPar-MF-JEA,KaHyPar-MF-SEA} and the connectivity metric~\cite{KaHyPar-MF-JEA,KaHyPar-MF-SEA,KaHyPar-CA,KaHyPar-K}. This is the first $k$-way approach that significantly outperforms recursive bipartitioning. However, recursive bipartitioning is still highly important as we use it within our initial partitioning algorithm.
In the following, we describe our approach for recursive bipartitioning.

\paragraph{Recursive Bipartitioning}
If $k$ is a power of two, the final $k$-way partition is obtained by first computing a bipartition of the
initial hypergraph and then recursing on each block. Hence, it takes $\log k$ such phases until the hypergraph is partitioned into $k$ blocks.
If $k$ is not a power of two, the approach has to be adapted to produce appropriately-sized partitions.
Our algorithm uses the following technique to compute a $k$-way partition via recursive bipartitioning for arbitrary values of $k$:
We compute a $2$-way partition of the hypergraph into blocks $A$ and $B$ such that $A$ has a maximum weight of $(1+\varepsilon') \lceil \lfloor k/2 \rfloor/k \cdot c(V) \rceil$ and $B$ has a maximum weight of $(1+\varepsilon') \lceil \lceil k/2 \rceil/k \cdot c(V) \rceil$,
where $\varepsilon'$ is a suitably adjusted imbalance parameter that ensures that the final $k$-way partition is $\varepsilon$-balanced.
Block $A$ is then partitioned recursively into $k' := \lfloor k/2 \rfloor$ blocks, while $B$ is partitioned into $k' := \lceil k/2 \rceil$ blocks.

\paragraph{Adaptive Imbalance}
If the initial imbalance parameter $\varepsilon$ was to be used in each bipartitioning step, the weight of the largest block $V_\text{max}$
could be larger than the maximum allowed block weight $L_{\max}$. Therefore it is necessary to restrict the allowed imbalance at each bipartition.
Our approach that balances flexibility over all levels is summarized in the following lemma:

\begin{lemma}[Adaptive Imbalance for Recursive Bipartitioning~\cite{KaHyPar-R}]
Let $H_0$ and $H_1$ be the hypergraphs induced by a bipartition $\Partition = \{V_0,V_1\}$ of an unweighted hypergraph $H=(V,E,c,\omega)$ for
which we want to compute an $\varepsilon$-balanced $k$-way partition. Using an adaptive imbalance parameter
\[
\varepsilon' := \left( \left(1+\varepsilon \right) \frac{k' \cdot c(V)}{k \cdot c(V_i)}\right)^{\frac{1}{\lceil \log k' \rceil}} -1
\]
to compute a $k'$-way partition (with $k'\geq 2$) of hypergraph $H_i$ with $i\in\{0,1\}$  via recursive bipartitioning ensures that the final $k$-way partition of $H$ is $\varepsilon$-balanced.
When computing the very first bipartition for a $k$-way partition, we set $H_0:=H$, $k':=k$ and therefore  $\varepsilon' := (1+\varepsilon)^{(1/\lceil \log k \rceil)} -1$.
\end{lemma}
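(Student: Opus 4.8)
The goal is to show that every block produced by the fully recursive scheme has weight at most $L_{\max} = (1+\varepsilon)\lceil c(V)/k\rceil$. The plan is to isolate the single algebraic identity that makes the adaptive parameter work and then push it through the recursion tree by induction. Writing $M := (1+\varepsilon)\,c(V)/k$ and $d := \lceil \log k'\rceil$, the definition of $\varepsilon'$ is exactly equivalent to
\[
(1+\varepsilon')^{d} = (1+\varepsilon)\,\frac{k'\,c(V)}{k\,c(V_i)}, \qquad\text{hence}\qquad (1+\varepsilon')^{d}\,\frac{c(V_i)}{k'} = M .
\]
The crucial observation is that the right-hand side $M$ is a \emph{global} constant, independent of the current subproblem $(H_i,k')$. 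Because the adaptive parameter is recomputed from the actual weight $c(V_i)$ at every recursive call, each subproblem is handed the same effective weight budget $M$; this is precisely what decouples the levels of the recursion.

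Next I would set up an induction on $k'$ with the invariant: if a subproblem $(H_i,k')$ satisfies the precondition $c(V_i)/k' \le M$, then recursively partitioning it with the adaptive scheme yields blocks of weight at most $M$. The base case $k'=1$ is immediate, since then the single block has weight $c(V_i) = c(V_i)/k' \le M$ by the precondition. For the step with $k'\ge 2$, the scheme bipartitions $H_i$ into $A$ (to be split into $k_A := \lfloor k'/2\rfloor$ blocks) and $B$ (into $k_B := \lceil k'/2\rceil$ blocks) subject to $c(A) \le (1+\varepsilon')\tfrac{k_A}{k'}c(V_i)$ and the analogous bound for $B$ (so far treating the weight bounds as real-valued). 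Dividing by $k_A$ and using the identity gives $c(A)/k_A \le (1+\varepsilon')\,c(V_i)/k' = M\,x^{1/d-1}$ with $x := M k'/c(V_i) \ge 1$ (the precondition) and $d\ge 1$; since $x^{1/d-1}\le 1$, child $A$ again satisfies its own precondition $c(A)/k_A \le M$, and likewise for $B$. As $k_A,k_B < k'$, the induction hypothesis applies to both children and bounds all of their blocks by $M$, completing the step.

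The top-level call has $H_0 = H$, $k'=k$ and $c(V_0)=c(V)$, so its precondition $c(V)/k \le M$ holds trivially for $\varepsilon\ge 0$, and $\varepsilon'$ reduces to $(1+\varepsilon)^{1/\lceil \log k\rceil}-1$ as claimed. The invariant then bounds every final block by $M \le (1+\varepsilon)\lceil c(V)/k\rceil = L_{\max}$, i.e. the resulting $k$-way partition of $H$ is $\varepsilon$-balanced.

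The main obstacle is the floor/ceiling rounding that I suppressed above: the algorithm actually enforces $c(A) \le (1+\varepsilon')\lceil \tfrac{k_A}{k'}c(V_i)\rceil$, and the global target itself carries a ceiling. Here I would exploit that the hypergraph is \emph{unweighted}, so all block weights are integers. Ceilings at internal nodes are harmless, because the adaptive re-normalization keeps each subtree's budget exactly $M$ no matter how the parent rounded $c(A)$ upward; the delicate point is the leaves ($k'=1$), where no further recursion can absorb the rounding and one must verify directly that $(1+\varepsilon')\lceil c(V_i)/k'\rceil \le L_{\max}$, using integrality together with the slack $(1+\varepsilon)\big(\lceil c(V)/k\rceil - c(V)/k\big)$ that the ceiling in $L_{\max}$ provides.
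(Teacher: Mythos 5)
Your first three paragraphs are correct and, as far as the argument goes, they are essentially the proof that the paper itself defers to the references \cite{Schlag20,KaHyPar-R}: the defining identity $(1+\varepsilon')^{\lceil \log k'\rceil}\, c(V_i)/k' = (1+\varepsilon)\,c(V)/k =: M$ turns the adaptive parameter into a re-normalization that hands every subproblem the same budget $M$, and your induction on $k'$ (child average $\le (1+\varepsilon')\,c(V_i)/k' = M x^{1/d-1}\le M$) propagates the precondition down the recursion tree and bounds every final block by $M \le L_{\max}$. This is a complete and correct proof when each bipartition constraint is read \emph{without} the inner ceiling, i.e.\ as $c(A)\le (1+\varepsilon')\,\tfrac{k_A}{k'}\,c(V_i)$.

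The gap is in your last paragraph. The leaf-level verification you defer --- that an integer block weight bounded by $(1+\varepsilon')\lceil c(V_i)/2\rceil$ is necessarily at most $L_{\max}$ --- is not merely delicate: it is false in general, so no argument can complete your plan. The slack you intend to exploit, $(1+\varepsilon)\bigl(\lceil c(V)/k\rceil - c(V)/k\bigr)$, vanishes whenever $k$ divides $c(V)$, and then a ceiling at an odd-weight bipartition admits a forbidden block. Concretely, with the paper's standard $\varepsilon=0.03$: take $c(V)=264$ unit-weight vertices and $k=8$, so $M=L_{\max}=1.03\cdot 33=33.99$. The splits $264\to(133,131)$, $133\to(67,66)$, $67\to(34,33)$ satisfy every ceiling-constraint of the scheme: $133\le 1.03^{1/3}\cdot\lceil 132\rceil\approx 133.31$, then $67\le \bigl(1.03\cdot\tfrac{1056}{1064}\bigr)^{1/2}\cdot\lceil 66.5\rceil\approx 67.74$, and finally $34\le 1.03\cdot\tfrac{528}{536}\cdot\lceil 33.5\rceil\approx 34.50$; yet the resulting block of weight $34$ exceeds $L_{\max}=33.99$, so the final $8$-way partition is not $\varepsilon$-balanced. (Your claim that internal ceilings are ``harmless'' by re-normalization is also unjustified --- it is exactly the integer excess at odd splits that accumulates here, although in this instance the leaf ceiling alone already breaks the bound.) The correct conclusion is that the lemma holds precisely under the real-valued reading of the bipartition constraints, which your induction already establishes; with the ceilings of the algorithm's description the statement needs an additional assumption (e.g.\ a quantified positive slack $\lceil c(V)/k\rceil-c(V)/k$) or a weakened conclusion, and your proof should say so rather than promise a verification that cannot succeed.
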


\begin{proof}
We refer the reader to the thesis \cite{Schlag20} and the paper \cite{KaHyPar-R} for the proof of the lemma.
\end{proof}

 \paragraph{Cut-Net Splitting and Cut-Net Removal}
Depending on the objective function that we optimize, cut-nets need to be treated differently when recursing on the two  hypergraphs induced by a bipartition $\Partition = \{V_0,V_1\}$.
For cut-net optimization, we recurse on the \emph{section hypergraphs} $H_{\times V_0}$ and $H_{\times V_1}$ that omit the cut-nets completely, where
 $H_{\times V_i} :=(V_i, \{e \in E~|~e \subseteq V_i\})$~\cite{Berge:75,Berge:1985}.
Note that the section hypergraphs do not contain the cut-nets, because these nets will always be cut nets in the final $k$-way partition, and already contribute $\omega(e)$ to the total cut size~\cite{PaToHManual}.
This simultaneously reduces the number of nets as well as their average size
in each subhypergraph, without affecting the partitioning objective.
For connectivity optimization, however, all following bipartitions can further increase the connectivity $\lambda$ of the
cut nets. Therefore, it is necessary to recurse on the \emph{subhypergraphs} $H_{V_0}$ and $H_{V_1}$, in which each cut-net $e$
is \emph{split} into two nets $e_0 = e \cap V_0$ and $e_1 = e \cap V_1$. Single-pin nets can be discarded in this process, as they
cannot be cut in further bipartitioning steps.

\subsection{The Preprocessing Phase}\label{sec:preprocessing}
Before starting the $n$-level partitioning process, we apply a hypergraph sparsifier~\cite{KaHyPar-K} and infer information about the hypergraph's community structure~\cite{KaHyPar-CA}. Section~\ref{ssec:sparsification} describes the sparsification algorithm. Community
detection is discussed in Section~\ref{ssec:community_detection}. Whereas the former modifies the input hypergraph to speed up
the overall partitioning process, the latter gathers information that will be used to guide the coarsening process.

\subsubsection{Pin Sparsification via Locality-Sensitive Hashing}\label{ssec:sparsification}
\paragraph{Motivation}
Algorithms employed in each phase of the multi-level framework often perform computations on the vertices and their set of neighbors (e.g., to find the ``best''
contraction partner $u \in \neighbors(v)$ for vertex $v$ during coarsening). For a given vertex $v$, this requires iterating over the set of \emph{all} pins $p \in e$ of \emph{all} incident nets $e \in \incnets(v)$.
Especially for hypergraphs with many large nets, these calculations can therefore have a significant impact on the overall running time of the respective
algorithm.  To alleviate this impact, we employ a \emph{pin} sparsifier as a preprocessing technique that identifies (and contracts) vertices with similar neighborhoods and thus
reduces the average hyperedge size~\cite{KaHyPar-K}.

\paragraph{Central Idea}
We consider two vertices $u$ and $v$ to be similar, if they \emph{share} many nets, i.e., if their sets of incident nets $\incnets(u)$ and $\incnets(v)$ have a relatively large intersection.
Similarity is measured using the Jaccard coefficient $J(A, B) =  |A \cap B|/|A \cup B |$, for finite sets $A$ and $B$. Let
$J(u, v)$ denote the Jaccard coefficient of the incident nets of two vertices $u$ and $v$, i.e., $J(u, v) =  |\incnets(u) \cap \incnets(v)|/|\incnets(u) \cup \incnets(v)|$.
Then, the corresponding \emph{distance} metric is $D(u,v) = 1-J(u, v)$~\cite{DBLP:conf/stoc/Charikar02}.
Since calculating these distances/similarities for every pair of vertices would lead to a quadratic-time algorithm, we instead use
the \emph{locality-sensitive} hashing (LSH) technique~\cite{Indyk:1998:ANN:276698.276876,Gionis:1999:SSH:645925.671516} to identify sets of similar vertices that are ``close''
to each other with respect to $D(\cdot,\cdot)$. Similar vertices  are then contracted to reduce the number of pins in the hypergraph.

\paragraph{Locality-Sensitive Hashing (LSH)}
The key idea of the LSH approach is to hash elements in such a way that the probability for ``close'' elements to have equal hash values is high, while the probability for ``distant'' elements
to have equal hash values is low. For the Jaccard distance $D(\cdot,\cdot)$, the following family of hash functions (called \emph{min-hash}) is known to be locality-sensitive:
$\mathcal{H} = \{h_\sigma(X) = \min\{\sigma(x) \mid x \in X \mid \sigma \in \Sigma\}\}$, where $X \subseteq U$ is a finite set of elements from a finite universe $U$, and $\sigma$ is a random permutation from the set $\Sigma$ of all random permutations
of $U$~\cite{Broder:1997:RCD:829502.830043,DBLP:journals/cn/BroderGMZ97}.
It can be proven that $Pr[h_\sigma(A) = h_\sigma(B)] = J(A,B)$~\cite{DBLP:journals/cacm/AndoniI08,DBLP:conf/www/LiK10}, i.e.,
the \emph{larger} the distance, the \emph{smaller} the collision probability~\cite{Gionis:1999:SSH:645925.671516}; see also \cite{DBLP:books/cu/LeskovecRU14}. For increased efficiency, we replace random permutations $\sigma$ by hash functions of the form
$h(x) = ax+b\bmod p$ \cite{Broder:2000:MIP:348360.348768}.

\paragraph{High-Level Algorithm Outline}
The algorithm works in multiple passes. In the beginning, all vertices are marked as unclustered. Each pass then starts by identifying buckets of similar vertices using min-hash fingerprints. The fingerprint of a vertex $v$ is defined as $g_i(v) = (h_{1}(v), h_{2}(v), \cdots, h_{i}(v))$, where each hash function $h_j$ is chosen uniformly at random from $\mathcal{H}$. We consider two fingerprints to be equal if and only if all $i$ hash values are equal. Hence, only vertices with the \emph{same} fingerprint will be put in the same bucket.
The size of the fingerprint (i.e., the number $i$ of min-hashes) affects the probability that vertices are put into the same cluster~\cite{DBLP:books/cu/LeskovecRU14}. By increasing the number of hashes, we decrease  the probability that ``distant'' vertices have the same fingerprint.
However, at the same time, increasing the size of the fingerprint also decreases the probability of ``close'' vertices ending up in the same cluster.
To avoid this problem, we calculate multiple fingerprints for each vertex. Since the distance between vertices varies in different parts of a hypergraph, we choose both the size of the fingerprint and the number of fingerprints adaptively.

After computing the buckets, each yet unclustered vertex is then clustered with similar vertices from it's corresponding bucket as long as the size of the resulting cluster is less than $c_\text{max}$. If the size is at least $c_\text{min}$, all vertices of the newly formed cluster become inactive and do not participate in the next pass.
By bounding cluster sizes from below by $c_\text{min}$ and from above by $c_\text{max}$, we enforce the formation of reasonably balanced clusters in order to allow the partitioning algorithm to compute feasible solutions of high quality.
The clustering algorithms stops as soon as the number of resulting clusters is less than $n/2$ or the maximum number of passes is reached.
Each cluster is then contracted to a single vertex. We implement the sparsifier in such a way that the total running time is linear in the total number of pins.

\subsubsection{Detecting Community Structure To Improve Coarsening}\label{ssec:community_detection}
\paragraph{Motivation}
The goal of the coarsening phase is to create successively smaller but \emph{structurally similar} approximations of the input hypergraph in which both the exposed
hyperedge weight as well as the sizes of the hyperedges are successively reduced. This is commonly done by using rating functions to identify
and contract highly connected vertices, and by allowing the formation of vertex \emph{clusters} instead of enforcing maximal matchings,
since this can destroy some naturally existing clustering structures in the hypergraph~\cite{Karypis2003}.
However, even when allowing the formation of vertex clusters, several situations can arise in which the naturally existing structure is obscured. Examples include tie-breaking decisions if multiple neighbors of a vertex have the same rating score, or
preventing certain contractions by enforcing an upper-bound on the vertex weights to ensure that the distribution of vertex weights
does not become too imbalanced at the coarsest level (which would limit the number of feasible initial partitions satisfying the balance constraint).

Situations like these arise because all coarsening algorithms are guided by local, greedy decisions based on rating functions that solely consider
the weights and sizes of nets connecting candidate vertices and therefore lack a global view of the clustering problem.
We therefore use an approach which incorporates \emph{global} information about the \emph{community structure} into the coarsening process~\cite{KaHyPar-CA}.

\paragraph{Community Detection via Modularity Maximization}
We perform community detection on hypergraphs by translating this problem into the problem
of modularity maximization in graphs.
Community detection tries to extract an underlying structure from a graph by dividing its nodes
into disjoint subgraphs (communities) such that connections are dense \textit{within} subgraphs but
sparse between them~\cite{Fortunato201075,Schaeffer07graph}. Different quality functions are used to judge
the goodness of a division into communities. Among those, the most popular quality function is the \textit{modularity} measure \cite{Newman04}. It
compares the observed fraction of edges within a community with the expected fraction of edges if edges were placed using a random edge distribution that
preserves the degree distribution of the graph~\cite{Fortunato20161}. More formally, given a graph $G$ and disjoint communities $C=\{C_1,\dots,C_x\}$, modularity is defined as:
\begin{equation}\label{eq:modularity}
  Q:= \frac{1}{2m} \sum_{ij}\left[A_{ij} - \frac{k_ik_j}{2m}\right]\delta(C_i,C_j),
\end{equation}
where $A_{ij}$  is the entry of the adjacency matrix $A$ representing edge $(i,j)$, $m=\frac{1}{2}\sum_{ij}A_{ij}$ is the number of edges in the graph,
$k_i$ is the degree of node $i$, $C_i$ is the community of vertex $i$, and $\delta$ is the Kronecker delta. Note that this can be generalized to
weighted graphs: $A_{ij}$ represents the weight of edge $(i,j)$, $k_i=\sum_jA_{ij}$ is the weighted degree of node $i$ and $m=\frac{1}{2}\sum_{ij}A_{ij}$ is the sum of all edge weights~\cite{WeightedMod}.
Modularity optimization
is known to be NP-hard~\cite{modularity.np}, but several efficient heuristics exist. In KaHyPar, we use the Louvain algorithm of \citet{Louvain},
which has low computational complexity and is thus suitable for large graphs~\cite{Fortunato201075,Fortunato09}.

\paragraph{Community-aware Coarsening Framework}
Our framework consists of two phases. First, we use the Louvain algorithm to partition the vertices of the hypergraph
into a set $C = \{C_1,\dots,C_x\}$ of internally densely and externally sparsely connected communities. The actual number of
communities $|C|$ is determined by the community detection algorithm. Then, the hypergraph coarsening algorithm described in Section~\ref{sec:coarsening}
 is applied on each community $C_i$ independently. This can be accomplished by modifying the algorithm to only contract vertices
within the \emph{same} community by restricting potential contraction partners of a given a vertex $u \in C_i$ to $\neighbors(u) \cap C_i$.
By preventing inter-community contractions, the coarsening algorithm maintains the structural similarity
discovered by the community detection algorithm, while still allowing local, intra-community decisions to be based
on HGP-specific rating functions.

\paragraph{Hypergraph Clustering by Modularity Optimization of the Bipartite Representation}
We apply graph modularity optimization to the representation of the hypergraph by the bipartite graph $G_*$ already described in the introduction. We can then infer a clustering of the hypergraph from the clustering of the vertex side of $G_*$.
In order to make this work, we encode information about the hypergraph structure into the edge weights of $G_*$ (see Figure~\ref{fig:bipartite}). When the \emph{edge density}
$\delta=m/n$ is sufficiently large ($\delta\geq 0.75$ in our implementation), we use constant edge weights.
For less dense inputs, we use an edge weight $\omega(v, e) := d(v)/|e|$.
By weighting the graph edge inversely proportional to the size of the net,
smaller nets get a higher influence on the community structure than larger nets. If many small nets are contained within a community, the coarsening
algorithm can successively reduce their size and eventually remove them from the hypergraph.
Furthermore, this ensures that large nets do not dominate the community structure by attracting too many vertices.
By also incorporating the vertex degree, we strengthen the connection between nets and high-degree vertices to facilitate
the formation of communities around high-degree vertices in the hypergraph.

\begin{figure}[t!]
  \centering
  \includegraphics[]{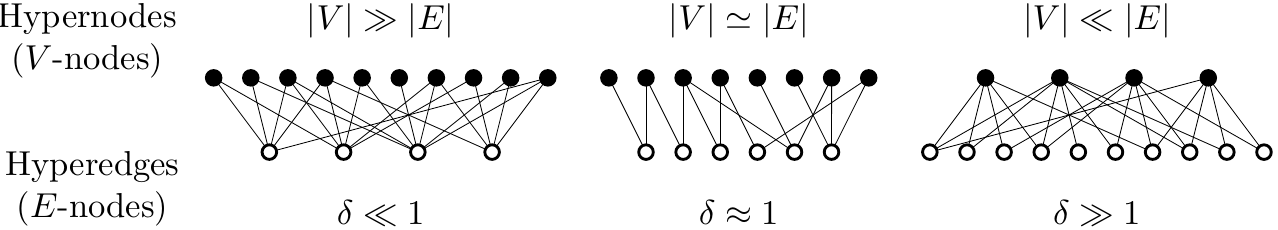}
  \caption[Star-expansion of hypergraphs of varying edge density $\delta$. ]{Bipartite graph-based representations of hypergraphs of varying edge density $\delta$. For hypergraphs with $\delta \ll 1$, the bipartite
    graph consists of many $V$-nodes with low average degree and fewer $E$-nodes with high average degree (left).
    If $\delta \approx 1$, the number of $V$- and $E$-nodes and their average degrees are roughly equal (middle). Hypergraphs with high ratio $\delta \gg 1$
    lead to bipartite representations with fewer $V$-nodes with high average degree and many $E$-nodes with low
    average degree (right) (adapted from \cite{KaHyPar-CA}).}\label{fig:bipartite}
\end{figure}

\subsection{The Coarsening Phase}\label{sec:coarsening}
\paragraph{Motivation}
Multi-level coarsening algorithms either compute matchings~\cite{AlpertHK98,KarypisAKS99,DBLP:journals/siamrev/VastenhouwB05,DBLP:conf/ipps/DevineBHBC06}
or clusterings~\cite{DHauBo97,DBLP:journals/vlsi/KarypisK00,DBLP:journals/tpds/CatalyurekA99,DBLP:conf/ispdc/TrifunovicK04,DBLP:journals/heuristics/MeyerhenkeS016} on each level of the coarsening hierarchy
using different rating functions to determine the vertices to be matched or clustered together.
The contracted vertices then form the vertex set of the coarser hypergraph on the next level.
In contrast, $n$-level partitioning algorithms like the graph partitioner KaSPar~\cite{DBLP:conf/esa/OsipovS10} create a hierarchy of (nearly) $n$ levels
by removing only a \emph{single} vertex between two levels, which obviates the need for matching or clustering algorithms in the coarsening phase.
KaSPar uses a priority queue to determine which vertex pair to contract next. After each contraction, the priority of each neighbor of the
contracted vertex needs to be updated in order to keep the priorities consistent. For hypergraphs, this constitutes a
severe performance bottleneck as even a \emph{single} large hyperedge can significantly increase the size of the neighborhood.
In this section, we present a simple coarsening algorithm that avoids this bottleneck but retains the solution quality of the KaSPar approach~\cite{KaHyPar-K}.

\paragraph{Rating Function}
Our algorithm adopts the \emph{heavy-edge} rating function \cite{KarypisAKS99,DBLP:journals/jpdc/TrifunovicK08,PaToHManual}, which prefers vertex
pairs  $(u,v)$ that have a large number of heavy nets with small size in common:
\begin{equation}\label{eq:our_rating_funct}
\text{r}(u,v) :=\sum \limits_{e \in \{\incnets(v) \cap \incnets(u)\}}  \frac{\omega(e)}{|e| - 1}.
\end{equation}

\paragraph{Detecting Single-Vertex Nets and Parallel Nets}
The contraction of a vertex pair $(u,v)$ can lead to parallel nets (i.e., nets that contain exactly the same vertices) and single-vertex nets in $\incnets(u)$. We continuously detect and remove these nets from the hypergraph. Parallel  nets are replaced by a single net whose weight is the sum of the weights of the parallel nets.
 Single-vertex nets are easily identified, because $|e|=1$.
Parallel net detection works by computing a fingerprint $f_e := \sum_{v \in e} v^2$ where squaring acts as a simple and fast hash function%
\footnote{Using a slightly more expensive hash function, this approach has concrete performance guarantees \cite{DBLP:conf/ipps/Hubschle-Schneider18}.} and where the commutativity of the sum operation ensures that the order of pins does not matter \cite{DBLP:journals/siamsc/HendricksonR98}.
In order to efficiently support the $n$-level approach, $f_e$ is updated incrementally: If net $e$ contained $v$ before the contraction, $v^2$ is subtracted from $f_e$. If $e$ contained $v$ but not $u$, relinking implies that $u^2$ has to be added to $f_e$.

\paragraph{Algorithm Outline}
In general, our algorithm is very similar to the First Choice (FC) algorithm employed in hMETIS-K~\cite{DBLP:journals/vlsi/KarypisK00}. It works in multiple passes.
At the beginning of each pass, we create a random permutation of the current vertex set. For each vertex $u$, we then determine its eligible contraction partner $v \in \neighbors(u)$ with highest rating $r(u,v)$
and immediately contract $(u,v)$ \emph{on the fly}. Thus, $v$ will be removed from the hypergraph and will not be visited in this or any future passes over the vertex set.
To avoid imbalanced inputs for the initial partitioning phase, a neighboring vertex $v$ is only considered as an eligible contraction partner
 if $c(u) + c(v) \leq \kappa$,
where $\kappa := \lceil \frac{c(V)}{t\cdot k} \rceil$ is
the maximum allowed vertex weight. Here, parameter $t$ is used to control the size of the coarsest hypergraph.
If multiple neighbors have the same rating score, we favor a vertex that has not yet taken part in any contractions during the pass to break ties.
To speed up the coarsening process in the presence of large nets, we do not evaluate the rating function for nets larger than $\iota$ vertices.
A pass ends as soon as every vertex in the random permutation was
considered either as representative or as contraction partner. Then, a new pass is started by creating a new random permutation of the remaining vertices. The coarsening process is stopped as soon as the number of vertices drops below $t \cdot k$ or no eligible vertex is left.
This algorithm has been shown to be significantly faster than an engineered, hypergraph-specific version of the KaSPar approach~\cite{KaHyPar-R} while yielding comparable solution quality~\cite{KaHyPar-K}.

\paragraph{A Different View on Coarsening}
The key difference to FC and related coarsening algorithms lies in the way contractions are handled. While traditional algorithms \emph{first} compute a
matching/clustering on each level and \emph{then} use it to create a coarse hypergraph for the next level, we rate \emph{and} contract one vertex at the same time, i.e.,
after finding the contraction partner $v \in \neighbors(u)$ for a vertex $u$, we immediately contract $(u,v)$.
Thus, while in FC clustering decisions are made for all vertices of the current level \emph{at once}, and more importantly \emph{independently} of one another,
our $n$-level coarsening algorithm adaptively adjusts \emph{every} contraction decision to the current structure of the hypergraph induced by all previous contractions.

\subsection{The Initial Partitioning Phase}\label{sec:initial_partitioning}
In the direct $k$-way setting, KaHyPar computes an initial partition of the coarsest hypergraph via $n$-level recursive bipartitioning
as described in Section~\ref{sec:rb_vs_kway}. To obtain an initial bipartition, we use a portfolio approach composed of nine different
initial bipartitioning algorithms. The portfolio approach increases diversification and produces
better results than single initial partitioning algorithms alone~\cite{HeuerBA}. Each algorithm runs
$20$ times using different random seeds and the quality of each computed partition is improved using our $2$-way FM local search algorithm (see Section~\ref{subs:2way_fm}).
The partition with the best solution quality or, in case of ties, with the lowest imbalance, is
used as initial solution and projected back to the original hypergraph. In case all partitions are imbalanced, we choose the partition with
smallest imbalance.

In the following, we give a brief overview of the bipartitioning algorithms employed in our portfolio approach and
refer to the corresponding bachelor thesis~\cite{HeuerBA} for a more detailed description and evaluation.

\paragraph{Random \& BFS-based Partitioning}
Random partitioning~\cite{PaToHManual, KarypisAKS99} randomly assigns vertices to one of the two blocks, provided that the assignment does not violate the
balance constraint. In case of a violation, the vertex in question is assigned to the opposite block. If both assignments would lead
to overloaded blocks, the vertex is randomly assigned to one of the blocks.
Breadth-First-Search (BFS) partitioning~\cite{PaToHManual,DBLP:journals/siamsc/KarypisK98,KarypisAKS99} starts with a randomly chosen vertex and performs a BFS traversal
of the hypergraph until the weight of all discovered vertices would exceed the balance constraint. The vertices visited during the traversal constitute the first block $V_0$,
all remaining vertices constitute the second~block~$V_1$.

\paragraph{Greedy Hypergraph Growing (GHG)}
Furthermore, we use different variations of the GHG algorithm proposed by \citet{DBLP:journals/tpds/CatalyurekA99}.
Unlike the original algorithm, which grows a cluster around a randomly selected seed vertex, our versions first compute two pseudo-peripheral seed vertices~\cite{george1981computer} as follows:
Starting from a random vertex, we perform a BFS. The last vertex visited serves as the starting vertex
for the next BFS. This vertex and the last vertex visited by the second BFS are supposed to be ''far'' away from each other.
Therefore, one is used as the seed vertex for block $V_0$, the other for block $V_1$.  For each block, we maintain a priority queue (PQ) that stores the neighboring vertices
of the growing cluster according to a score function. The algorithm then iteratively selects the vertex with the highest score from one of the PQs, moves the vertex
to the corresponding block, and then updates the scores of neighboring vertices.
We use the FM gain (see Eq.~\ref{eq:gain})
and the \emph{max-net} gain definition (which is also used in PaToH~\cite{PaToHManual}) as score functions.
The max-net gain counts the weights of all nets connected to the target block
(i.e., the gain of assigning vertex $v$  to block $V_i$ is defined as $g_\text{max-net}(v) := \sum_{e \in E'} \omega(e)$, where $E' := \{e \in \incnets(v)~|~\pinsinpart(e,V_i) > 0\}$).
Our GHG variants also differ in the way the clusters are grown.
The \emph{global} strategy always moves the vertex with the highest score of both PQs to the corresponding block, whereas the
\emph{sequential} approach only grows block $V_0$ and assigns the remaining vertices to block $V_1$
(also implemented in PaToH~\cite{DBLP:journals/tpds/CatalyurekA99} and Metis~\cite{DBLP:journals/siamsc/KarypisK98}).
The \emph{round-robin} technique grows both blocks simultaneously.
In total, the initial partitioning portfolio therefore contains six different initial partitioning algorithms based on GHG.

\paragraph{Size-Constrained Label Propagation}
The last algorithm is based on the adaptation of size-constrained label propagation (SCLaP)~\cite{DBLP:journals/heuristics/MeyerhenkeS016} to HGP local search.
The SCLaP-based refinement algorithm was initially proposed in the master thesis of Vitali Henne~\cite{HenneMA}.
Each vertex has a label representing its block.  Initially all labels are empty, i.e., all vertices are unassigned.
The algorithm starts by searching two pseudo-peripheral vertices via BFS as described above. One vertex and $\tau$ of its neighbors then get label $V_0$, while the other
vertex and $\tau$ of its neighbors get label $V_1$.  The algorithm then works in rounds until it has converged, i.e., no empty labels remain.
In each round, the vertices are visited in random order and each vertex $u$ is assigned the label of the neighbor $v \in \neighbors(u)$ that results
in the highest FM gain, provided that the resulting cluster does not become overloaded. Ties are broken randomly.
Once the algorithm has converged,  vertices with the same label then become a block of the bipartition. The tuning parameter $\tau$ is used to prevent labels from disappearing over the course of the algorithm,
and, based on experimental results~\cite{HeuerBA}, is set to $\tau=5$  in our implementation.

\subsection{Localized \texorpdfstring{$2$}{2}-way and \texorpdfstring{$k$}{k}-way FM Local Search}\label{sec:refinement}
\paragraph{Overview}
We now turn to our local improvement algorithms. Both $2$-way and $k$-way local search follow the FM paradigm~\cite{FiducciaM82} and are further inspired by the algorithm used in
KaSPar~\cite{DBLP:conf/esa/OsipovS10}. A key difference to the traditional FM algorithm is the way a local search pass is started:
Instead of initializing the algorithm with all vertices or all border vertices, we perform a \emph{highly localized}  search starting only with the representative and the just uncontracted vertex.
The search then gradually expands around this vertex pair by successively considering neighboring vertices.
Our $2$-way local search algorithm optimizing the cut-net metric $\ocut$ is described in Section~\ref{subs:2way_fm}. It is also used
to implicitly optimize the connectivity metric $\ocon$ when KaHyPar is configured to use recursive bipartitioning.
In Section~\ref{subs:kway_fm}, we then describe our $k$-way local search algorithm. Unlike in the case of $2$-way partitioning,
objective-specific gain computations and delta-gain updates are necessary to permit the algorithm to optimize both objectives.
Traditional multi-level FM implementations as well as KaSPar \emph{always} compute the gain of \emph{each} vertex from scratch at each level of the hierarchy.
During an FM pass, these values are then either kept up-to-date by delta-gain updates~\cite{DBLP:journals/tpds/CatalyurekA99,DPapa2007} or recomputed whenever necessary~\cite{DBLP:phd/dnb/Schulz13a}.
Since our algorithms start around only two vertices, many gain values would never be used during a local search pass.
In Section~\ref{subs:gain_caching}, we therefore propose a \emph{gain caching technique} that ensures that the gain of a vertex move is calculated at most \emph{once} during \emph{all}
local searches along the $n$-level hierarchy.
Since local search is done after each uncontraction, it is necessary to limit the number of vertex moves in
each pass, because otherwise the $n$-level approach could lead to a quadratic
number of local search steps in total. In Section~\ref{subs:stopping_rules}, we therefore present
the \emph{adaptive stopping rule} that terminates the iterative improvement process before all vertices
have been moved.

\subsubsection{2-way Localized FM Refinement}\label{subs:2way_fm}
\paragraph{Algorithm Outline}
We use two PQs to maintain the possible moves for all vertices -- one for each block.
At the beginning of a local search pass, both queues are empty and disabled. A disabled PQ will not be considered when searching for the next move with the highest gain.
Initially, all vertices are labeled \emph{inactive}.
Only inactive vertices are allowed to become \emph{active}.
To start the local search phase after each uncontraction, we activate the representative and the just uncontracted vertex if
they are border vertices. Otherwise, no local search phase is started.
\emph{Activating} a vertex $v$ currently assigned to block $V_i$ means that we calculate the FM gain $g_j(v)$ for moving $v$ to another block $V_j \in \adjblocks(v) \setminus \{V_i\}$ and
insert $v$ into the corresponding queue $P_j$ using $g_j(v)$ as key. The FM gain of a vertex $v \in V_i$  is defined as
\begin{equation} \label{eq:gain}
g_{j}(v) := \omega(\{ e~|~e \in \incnets(v):~\pinsinpart(e, V_j) = |e| - 1\}) - \omega(\{ e~|~e \in \incnets(v):~\pinsinpart(e, V_i) = |e|\}).
\end{equation}

After insertion, PQs corresponding to \emph{underloaded} blocks become enabled. Since a move to an overloaded block will
never be feasible, any queue corresponding to an overloaded block is left disabled.
The algorithm then  repeatedly queries only the \emph{non-empty, enabled} queues to find the move with the highest gain $g_{j}(v)$,
breaking ties arbitrarily. Vertex $v$ is then moved to block $V_j$ and labeled as \emph{marked}.
We then update all neighbors $\neighbors(v)$ of $v$ as follows: All previously inactive neighbors are activated as described above.
All active neighbors that have become internal are labeled inactive and the corresponding moves are deleted from the PQs.
Finally, we perform \emph{delta-gain updates} for all moves of the remaining active border vertices in $\neighbors(v)$:
If the move changed the gain contribution of a net $e \in \incnets(v)$, we account for that change by
incrementing/decrementing the gains of the corresponding moves by $\omega(e)$ using the delta-gain-update algorithm of Papa and Markov~\cite{DPapa2007}.
Once all neighbors are updated, local search continues until either no non-empty, enabled PQ remains or the stopping rule mandates the termination of the current pass.
After local search is stopped, we reverse all moves until we arrive at the lowest cut state reached during the search that fulfills the balance constraint.
All vertices become inactive again and the algorithm is then repeated until no further improvement is achieved.

\paragraph{Locked Nets}
To further decrease the running time, we exclude nets from gain updates that cannot be removed from the cut in the current local search pass.
A net is \emph{locked} in the bipartition once it has at least one marked pin in each of the two blocks \cite{Krishnamurthy84}.
In this case, it is not possible to remove such a net from the cut by moving any of the remaining movable pins to another block.
Thus, it is not necessary to perform any further delta-gain updates for locked nets, since their contribution to the gain values of their pins does not change any more.

\subsubsection{\texorpdfstring{$k$}{k}-way Localized FM Refinement}\label{subs:kway_fm}
There is a large design space for $k$-way refinement algorithms. On one extreme, there is the single-level $k$-LA$_\ell$-FM algorithm
of Sanchis~\cite{Sanchis89,Sanchis93} which maintains $k(k-1)$ priority queues (one for each possible move direction). On the
other extreme, we have the $k$-way local search technique of KaFFPa~\cite{kaffpa}  that uses a single priority queue
which only stores the highest-gain move for every vertex. In between, there is the rotary KLFM algorithm of
Chan et al.~\cite{MLHtr,Zien:1997:HSP:266388.266523} and the K-PM approach of Cong and Lim~\cite{DBLP:conf/iccad/CongL98}. Rotary KLFM
uses  $2(k-1)$ PQs and in each round only allows moves between a target block $V_i$ and all other blocks $\Partition \setminus \{V_i\}$,
while K-PM only uses two PQs and iteratively improves the $k$-way partition by moving vertices between all $k(k-1)/2$ pairs of blocks.
Our algorithm is based on the refinement scheme used in KaSPar (i.e., we use $k$ priority queues and each PQ stores the vertex moves to that particular block), because $k$-LA$_\ell$-FM is too expensive for large $k$~\cite{Sanchis89,Sanchis93},
both rotary KLFM and K-PM only have a restricted view on the $k$-way partition, and the KaFFPa approach makes it necessary
to recompute gains after each move in order to identify those with highest gain.

\paragraph{Differences to the $2$-way Algorithm}\label{kfm-outline}
In general, the $k$-way refinement algorithm follows the same outline
as the $2$-way algorithm described in the previous section. We
therefore focus on the differences to the $2$-way algorithm before
describing the gain computation and delta-gain update techniques for
$k$-way connectivity and cut-net optimization.  We only consider
moving a vertex $v \in V_i$ to \emph{adjacent} blocks
$\adjblocks(v)\setminus\{V_i\}$ rather than calculating and
maintaining gains for moves to \emph{all} $k$ blocks.  This
simultaneously reduces the memory requirements (space bounded by the
number of border vertices) and restricts the
search space of the algorithm to moves that are more likely to improve
the solution. A PQ $P_i$ for
block $V_i$ is enabled if $P_i$ is not empty and $V_i$ is not \emph{overloaded}.
The $2$-way FM algorithm implicitly forces
unbalanced solutions to become balanced whereas the $k$-way algorithm only
maintains feasibility. This is due to the fact that we disable PQs of overloaded blocks.

\paragraph{Connectivity Metric: Gain Computation \& Delta-Gain Updates}
When activating a vertex $v \in V_i$, we calculate the \emph{gain} $g_{j}(v)$ for moving $v$ to all adjacent blocks $V_j \in \adjblocks(v) \setminus \{V_i\}$, and insert  $v$ into the corresponding queues $P_j$ using $g_j(v)$ as key.
For connectivity optimization, the  gain $g_j(v)$ is defined as
\begin{equation}\label{eq:kway_con_gain}
g_j(v) := \omega(\{ e~|~e \in \incnets(v):~\pinsinpart(e, V_i) = 1\}) - \omega(\{ e~|~e \in \incnets(v)~\pinsinpart(e, V_j) = 0\}).
\end{equation}

After moving a vertex $v$, we perform \emph{delta-gain updates} for all moves of the remaining active border vertices in $\neighbors(v)$.
If the move changed the gain contribution of a net $e \in \incnets(v)$, we account for that change by increasing/decreasing the gains of the corresponding moves by $\omega(e)$.
Moving a vertex $v$ can furthermore change the connectivity $\con(e)$ of a net $e \in \incnets(v)$, which in turn can affect the set of adjacent blocks $B(\cdot)$ for each neighbor in $\neighbors(v)$.
The delta-gain-update algorithm takes these changes into account by inserting moves to newly adjacent blocks into the PQs and removing moves to blocks that
are not adjacent anymore. For more details about the different delta-gain update cases and how handle it in case we optimize the cut metric, we refer the reader
to the corresponding papers \cite{KaHyPar-K, Schlag20}.

\paragraph{Excluding Nets from Delta-Gain Updates}
To further reduce the running time of the delta-gain algorithms, we exclude nets from the update procedure if their contribution to the gain values of their pins cannot change.
For connectivity optimization,
the key observation is that after moving a vertex $v$ to a block $\topart$, this block remains connected to all nets $e \in \incnets(v)$ during this local search pass,
because $v$ is not allowed to be moved again. In this case we say that block $\topart \in \conset(e)$ has become \emph{unremovable} for net $e$.
We exclude nets $e \in \incnets(v)$ from delta-gain updates after moving a vertex
$v$ from $\frompart$ to $\topart$ if both blocks  $\{\frompart, \topart\} \in \conset(e)$ are marked as unremovable:

\subsubsection{Caching Gain Values}\label{subs:gain_caching}
\paragraph{Gain Cache for $2$-way Refinement}
We use a cache array $\rho$ where $\rho[v]$ denotes the cache entry for vertex $v$.
After initial partitioning, the gain cache is empty. If a vertex becomes activated during a local search pass, we check whether or not its gain is already cached.
If it is cached, the cached value is used for activation. Otherwise, we calculate the gain according to Eq.~\ref{eq:gain},
insert it into the cache and activate the vertex. After moving a vertex $v$ with gain $g_{j}(v)$ to block $V_j$, its cache value is set to $\rho[v] := - g_{j}(v)$.
The delta-gain updates of its neighbors $\neighbors(v)$ are then also applied to the corresponding cache entries.
Thus, the gain cache always reflects the current state of the hypergraph.
Since our algorithm performs a  rollback operation at the end of a local search pass that undoes vertex moves, we also have to undo delta-gain updates applied on the cache.
This can be done by additionally maintaining a \emph{rollback} \emph{delta-gain} \emph{cache} that stores the negated delta-gain updates for each vertex. During rollback,
this delta cache is then used to restore the gain cache to a valid state.

Each time a local search is started with an uncontracted vertex pair $(u,v)$, we have to account for the fact that the uncontraction potentially affected~$\rho[u]$.
A simple variant of the caching algorithm just recalculates $\rho[u]$. Since $v$ did not exist on previous levels of the hierarchy,
its gain must also be computed from scratch. For $2$-way refinement, we instead use a more sophisticated variant that is able to update $\rho[u]$ based
on information gathered during the uncontraction and that further infers $\rho[v]$ from $\rho[u]$.
For more details on how to update cache entries $\rho[u]$ and $\rho[v]$ after uncontraction, we refer the
reader to \cite{KaHyPar-R}.

\paragraph{Gain Data Structure for $k$-way Refinement}
In order to generalize the $2$-way gain cache to $k$-way partitioning, a redesign of the data structure is necessary.
Since in the $2$-way setting there is only one possible move for each vertex (i.e., moving it to the other block of the bipartition), a simple array is enough to store the gain values.
When performing $k$-way local search, each vertex can potentially be moved to $k-1$ different blocks. We therefore use a modified version of a folklore data structure to store
sparse sets (see, e.g., Ref.~\cite{Briggs:1993}). The $k$-way gain cache uses for each vertex such
a sparse set that stores for each adjacent block the corresponding gain values. Adding, removing or
updating a cache entry can be done in constant time and the data structure uses $\Oh{k}$ space for each
vertex.

\paragraph{Gain Cache for $k$-way Refinement}
Let $\rho_{v}[j]$ denote the cache entry for vertex $v$ and adjacent block $V_j \in \adjblocks(v)$.
After initial partitioning, we initialize the gain cache with all possible moves of all vertices of the coarsest hypergraph.
Each time a local search is started with an uncontracted vertex pair $(u,v)$, we invalidate and recompute their corresponding cache entries.
This is necessary since $v$ did not exist on previous levels of the hierarchy and since the uncontraction potentially affected both $\adjblocks(u)$ and
the corresponding gain values.\footnote{We also tried a version that -- similar to $2$-way gain caching -- updates the cache entries of $u$ based on the information
  gathered during uncontraction and then infers the cache entries of $v$ from those of $u$. However, recomputation turned out to be faster, since
updating and inferring the cache values is significantly more complicated.}
If a vertex becomes activated during a local search pass, the cached gain values are used for activation.
After moving a vertex $v$ with gain $g_{j}(v)$ from block $\frompart$ to block $\topart$, its cache value is updated as follows:
First, we remove the entry of $\rho_{v}[\text{\upshape to}]$, since $v \in \topart$ after the move and we only cache gain values
for moves to adjacent blocks $\adjblocks(v) \setminus \{\topart\}$. If $v$ remains connected to $\frompart$, we set $\rho_{v}[\text{\upshape from}] := - \rho_{v}[\text{\upshape to}]$.
For more details on how to update the remaining cache entries $\rho_{v}[i]$ with
$V_i \in \adjblocks(v) \setminus \{\frompart, \topart\}$
see \cite{KaHyPar-K}.

After updating the cache entries of the moved vertex, delta-gain updates of the neighbors $\neighbors(v)$ are then also applied to the corresponding cache entries.
Thus the gain cache always reflects the current state of the hypergraph. Similarly to the $2$-way case, we additionally maintain a \emph{rollback} \emph{delta} \emph{cache}
that stores the negated delta-gain updates for each vertex as well as the corresponding add/remove operation for $\adjblocks(\cdot)$ in order to
be able to restore the gain cache to a valid state during rollback.

\subsubsection{Restricting the Search Space via Adaptive Stopping}\label{subs:stopping_rules}
Unlike traditional multi-level partitioning algorithms with only few levels, which can afford to spend linear time in refinement heuristics at each hierarchy level,
the number of local search steps needs to be limited in the $n$-level setting. Otherwise the $n$-level approach could lead
to $\Oh{n^2}$ local search steps in total, if refinement is executed after every uncontraction. We therefore use a stopping rule that terminates
a local search pass well before every vertex is moved once.

The adaptive stopping criterion is a slightly modified version of the stopping rule proposed by Osipov~\etal~\cite{DBLP:conf/esa/OsipovS10} for $n$-level graph partitioning.
This approach approximates the local search as a random walk, i.e., gain values in each step are assumed to be identically distributed and independent random
variables. Based on the average gain $\mu$ since the last improvement and the variance $\sigma^2$  observed during the current local search it is shown that it is unlikely to still obtain an improvement after $p>\sigma^2/4\mu^2$ steps of the local search.
We integrate a slightly refined version of this adaptive stopping criterion into our algorithm:
On each level, local search performs at least $\log n$ steps after an improvement is found and continues as long as $\mu > 0$.
If $\mu$ is still 0 after $\log n$ steps, local search is stopped. This prevents the algorithm from getting stuck with zero-gain moves, which is likely for hypergraphs
that contain many large nets. Otherwise (i.e., if $\mu \neq 0$) we evaluate the equation and act accordingly.

\subsection{Flow-Based Refinement}\label{ssec:flows}
\paragraph{Motivation}
Move-based local search algorithms are prone to get stuck in local optima when used directly on the input hypergraph~\cite{DBLP:journals/vlsi/KarypisK00}.
The multi-level paradigm helps to some extent, since it allows a more global view of the problem at the coarse levels.
However, the solution quality still degrades for hypergraphs with large hyperedges.
Since large hyperedges are likely to have many vertices in multiple blocks, it is difficult to find impactful moves~\cite{DBLP:journals/siamsc/UcarA04}.
Thus, the gain of moving a single vertex to another block is likely to be \emph{zero}~\cite{DBLP:conf/sat/MannP14}.

While finding \emph{balanced} minimum cuts is NP-hard, a minimum cut separating two vertices can be found in polynomial time using the max-flow min-cut theorem~\cite{DBLP:conf/stoc/GoldbergT86} and maximum flow algorithms.
Flow algorithms find an optimal min-cut and do not suffer the drawbacks of move-based approaches.
However, they were long overlooked as heuristics for balanced partitioning due to their high complexity~\cite{GraphEdgeReduction,552086}.

\paragraph{Overview}
\citet{kaffpa} present a max-flow-based refinement algorithm for graph partitioning which is integrated into the multi-level partitioner KaFFPa and computes high-quality solutions.
\citet{552086} as well as~\citet{flowcutter} propose a flow-based bipartitioning algorithm (FlowCutter) which solves incremental max-flow instances to converge towards a balanced bipartition.
\citet{hfc_esa} employ FlowCutter as a refinement algorithm for hypergraph bipartitions.

We integrate FlowCutter refinement into our $n$-level framework.
Since it is not feasible to run this refinement on every level of the $n$-level hierarchy, we run it after uncontracting $i = 2^j$ vertices for increasing $j \in \mathbb{N}_+ $.
For $k$-way partitions we apply the refinement to block pairs using the active block scheduling approach of~\citet{kaffpa}.

We briefly outline how to refine a bipartition with FlowCutter, before filling in the details on the different steps.
Given a bipartition $(V_0, V_1)$ to refine, we decide which vertices can be moved -- denote these by $M$.
We contract the vertices $V_0 \setminus M$ to a vertex $s$, and $V_1 \setminus M$ to $t$.
Subsequently, we run FlowCutter on this hypergraph with $s$ as source and $t$ as sink to obtain an improved bipartition.
Since FlowCutter will separate $s$ and $t$, only the vertices in $M$ can change their block.

We describe the method used to find $M$ in Section~\ref{sec:flows:extraction}, then the FlowCutter algorithm in Section~\ref{sec:flows:hyperflowcutter}, and finally how to compute maximum flows on hypergraphs in Section~\ref{sec:flows:on-hypergraphs}.

\subsubsection{Identifying Vertices to Move}\label{sec:flows:extraction}

We use the same method as \citet{kaffpa}, adjusted for hypergraphs.
Vertices close to the cut between $V_0$ and $V_1$ are good candidates for $M$, since they are the most likely to improve the solution when moved.
Therefore, two BFSs are performed. One that only visits $V_0$ and one that only visits $V_1$.
They are initialized with the border vertices of $V_0$, $V_1$, respectively.
Every visited vertex is added to $M$, and the BFSs run until some weight constraint on $M \cap V_i$ would be violated.
\citet{kaffpa} use $(1 + \alpha \cdot \epsilon) \frac{c(V)}{k} - c(V_{1 - i})$ to bound the weight of $M \cap V_i$, where $\alpha \geq 1$ is an input parameter.
For $\alpha = 1$ this guarantees that the minimum cut is balanced.
While KaFFPa tries different values for $\alpha$, we only need one, since FlowCutter guarantees balanced partitions.
We use $\alpha = 16$ since this is the maximum value used in our previous work~\cite{KaHyPar-MF-JEA} that implemented the KaFFPa framework on hypergraphs.

The contracted hypergraph contains all hyperedges with pins in $M$, but we remove hyperedges with only one remaining pin.
Our framework is able to optimize the \emph{connectivity} as well as the \emph{cut-net} metric.
For the cut-net metric, we additionally omit hyperedges with pins in blocks other than $V_0, V_1$ since they cannot be removed from the cut.

\subsubsection{FlowCutter}\label{sec:flows:hyperflowcutter}

We take as input a (hyper)graph as well as an initial source $s$ and target vertex $t$.
First, we compute a maximum flow, which yields a source-side and a target-side minimum cut.
If either of these is balanced, we are done.
Otherwise, we transform all vertices on the smaller side to a source -- if the source-side is smaller -- or a target otherwise.
Further, we add one more vertex to the same side -- the \emph{piercing vertex} -- so that the next steps find different cuts.
Subsequently, we augment the previous flow to a maximum flow of the new network.
These steps are repeated until balance is achieved or the cut of the input bipartition is exceeded.
Piercing vertices are chosen incident to the cut of the smaller side.
If possible, we choose vertices that do not create an augmenting path as this improves balance without increasing the cut size.
Ties are broken randomly.

Once we find the first balanced bipartition, we try to improve its balance by keeping the algorithm running as long as the piercing vertices do not create augmenting paths.
More balanced solutions tend to give the FM local search more leeway for improvement.
Since this is very fast, we perform several repetitions, each starting at the first balanced bipartition.
This approach is similar to the \emph{most-balanced-minimum-cut} heuristic of KaFFPa~\cite{kaffpa}.

\paragraph{Maximum Flows on Hypergraphs}\label{sec:flows:on-hypergraphs}

\begin{figure}[t!]
	\centering
	\includegraphics[width=0.7\textwidth]{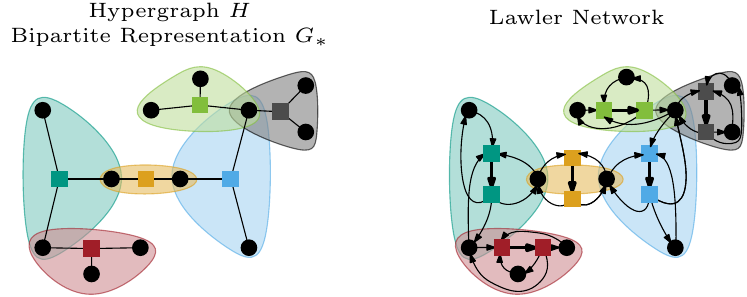}
	\caption[Comparison of different hypergraph flow networks.]
	{Unweighted Hypergraph $H$ with overlayed bipartite representation $G_*$ and illustration of the corresponding Lawler network.
		Thin, directed edges have infinite capacity, thick edges have unit capacity.}\label{fig:flow_networks}
\end{figure}

To compute min cuts we use maximum flows on hypergraphs, which are a straight-forward generalization from graphs.
In the \emph{maximum flow problem on hypergraphs}, we are given a hypergraph $H = (V,E)$, a capacity function $\mathpzc{c} : E \to \mathbb{N}_0$ on the hyperedges and two distinguished vertices $s$ and $t$.
An $s$-$t$-flow $f$ is a function, which assigns each pin the \emph{amount of flow} $f(u,e)$ it sends into the corresponding hyperedge.
A negative value means that the pin receives flow.
The flow on a hyperedge is $f(e) := \sum_{u \in e} \max(0, f(u,e))$.

A flow $f$ is feasible if it satisfies the following constraints.
The \emph{flow conservation constraints} $\forall u \in V \setminus \{s,t\}: \sum_{e \in I(u)} f(u,e) = 0$ and $\forall e \in E: f(e) = \sum_{u \in e} \max(0, -f(u,e))$ ensure that flow is created only at $s$ and drained only at $t$.
The \emph{capacity constraint} $\forall e \in E: f(e) \leq \mathpzc{c}(e)$ restricts the amount of flow on a hyperedge.

A flow is maximum if its value $|f| := \sum_{e \in I(s)} f(u,e)$ is maximum among all feasible flows.
The max-flow min-cut theorem states that the value $|f|$ of a maximum flow $f$ is equal to the capacity of a min-cut
separating $\mathpzc{s}$ and $\mathpzc{t}$~\cite{ford1956maximal}.

For each hypergraph, we can define an equivalent graph-based flow network, called the \emph{Lawler network} $L$~\cite{DBLP:journals/networks/Lawler73}.
$L$ contains two vertices $e_\text{in}, e_\text{out}$ for each hyperedge $e \in E$, and one vertex $v$ for each hypervertex $v \in V$.
For each hyperedge $e \in E$, it contains the edge $(e_\text{in}, e_\text{out})$ with capacity $\omega(e)$, as well as the edges $(v, e_\text{in})$ and $(e_\text{out}, v)$ with infinite capacity, for each $v \in e$.
See Figure~\ref{fig:flow_networks} for an illustration of $L$.
While we can run any flow algorithm on $L$, this is slow in practice.
Therefore, we never explicitly construct $L$ but adapt our flow algorithm so that it runs directly on the hypergraph -- now viewed as an implicit representation of $L$. We use an adaptation of Dinic' algorithm~\cite{10021311931}.

The algorithm consists of two alternating phases that are repeated until the flow cannot be augmented: assigning hop-distance labels to vertices by performing a BFS on residual edges (capacity not exceeded), and using DFS to find shortest edge-disjoint augmenting paths with distance labels increasing by one along the path.
By using two \emph{distance labels} for the hyperedges (representing $e_\text{in}, e_\text{out}$) in addition to those for the vertices, we can implement the BFS and DFS on the hypergraph.
The major performance benefit compared to the Lawler network comes from an optimization where we eliminate iterations over pins.
Consider a saturated hyperedge $e$, i.e., $f(e) = c(e)$ and a pin $u \in e$ from which we try to push flow via $e$.
There are two cases: either $u$ receives flow from $e$, in which case we can push that amount back and send it to any other pin, or we can only push flow to pins of $e$ that send flow into $e$.
In the latter case, we only need to scan those pins.
We achieve this by maintaining a partitioning of the pins of each net into three different sets: \emph{sends flow}, \emph{receives flow}, and has \emph{no flow}.

We refer the reader to the conference paper~\cite{whfc_sea} for more information, e.g., on how to perform augmentations, data structures and optimization techniques.

\newif\ifEnableExtendMemetic
\EnableExtendMemeticfalse

\subsection{Memetic Strategies}\label{subs:memetic_algo}
\paragraph{Motivation}
The story of high-quality hypergraph partitioning as told so far is very much about improving on simple local search. Preprocessing, the multi-level paradigm, and flows on pairs of blocks all mitigate this problem
but all lack a really global view. Repeated executions can help, but still only scratch the surface of the huge space of possible partitionings.
Hence, more sophisticated metaheuristics are needed.

\ifEnableExtendMemetic
Several genetic and memetic hypergraph partitioning algorithms have already been proposed in the literature~\cite{areibi2000integrated,areibi2004effective,ArmstrongGAD10,DBLP:conf/dac/BuiM94,DBLP:conf/gecco/KimKM04}.
However, \emph{none of them} is considered to be truly competitive with state-of-the-art tools~\cite{Cohoon2003}. We believe that this is due to the fact that all of them employ
 \emph{flat} (i.e., non multi-level) partitioning algorithms to drive the exploitation of the local solution space. \fi{}
We therefore integrate the $n$-level hypergraph partitioning framework presented in the previous sections with a genetic algorithm and thus develop
 the first multi-level memetic algorithm for the hypergraph partitioning problem~\cite{KaHyPar-E}.
\ifEnableExtendMemetic
 \paragraph{References}
 This chapter is based on a technical report~\cite{KaHyPar-E-TR} and a conference paper jointly published with Robin Andre and Christian Schulz~\cite{KaHyPar-E}. The paper was mainly written by the author of
 this dissertation, with editing by Christian Schulz. Large parts of this chapter were copied verbatim from the paper.
 The initial implementation of the  evolutionary framework was done by Robin Andre as part of his bachelor thesis~\cite{AndreBA},
 which was supervised by us. This implementation was then improved and integrated into the KaHyPar framework by the author of this dissertation.
 The experimental evaluation presented in Section~\ref{sec:evo_eval} contains some experimental results of Ref.~\cite{KaHyPar-MF-JEA}.\fi{}

\paragraph{Overview}
We start by explaining the components of our memetic $n$-level hypergraph partitioning algorithm.
Given a hypergraph $H$ and a time limit $t$, the algorithm starts by creating an initial
population $\mathcal{P}$ of \emph{individuals}, which correspond to $\varepsilon$-balanced $k$-way partitions of $H$.
The population size $|\mathcal{P}|$ is determined dynamically by first measuring the time $t_\text{I}$ spent to create one individual.
Then, $\mathcal{P}$ is chosen such that the time to create $|\mathcal{P}|$ individuals is a certain percentage $\eta$ of the total running time $t$:
$|\mathcal{P}| := \max(3,\min(50,\eta\cdot(t/t_I)))$, where $\eta$ is a tuning parameter.
We set the \emph{fitness} of an individual to the connectivity $\ocon$ or the cut-net metric $\ocut$ of its partition $\Partition$.
The initial population is evolved over several generational cycles using the \emph{steady-state} paradigm~\cite{EvoComp}, i.e., only \emph{one} offspring is created per generation using a recombination or mutation operation.
In order to sufficiently explore the global search space and to prevent premature convergence,
 we use mutation operators based on V-cycles~\cite{WalshawVcycle} that
exploit knowledge of the problem domain.
Furthermore, we propose a replacement strategy which considers
fitness \emph{and} similarity to determine the individual to be evicted from the population.

\subsubsection{Recombination Operators}\label{sec:combine}
We now describe the recombination operators. We generalize the recombine operator framework of KaFFPaE~\cite{kaffpaE} from
graphs to hypergraphs. Hence, the two-point recombine operator described in this section ensures that the fitness of the offspring is
\emph{at least as good as the best of both parents}. The edge frequency-based multi-point recombination operator described afterwards
gives up this property, but still generates good offspring.

\ifEnableExtendMemetic
\begin{figure}[t!]
  \centering
  \includegraphics[]{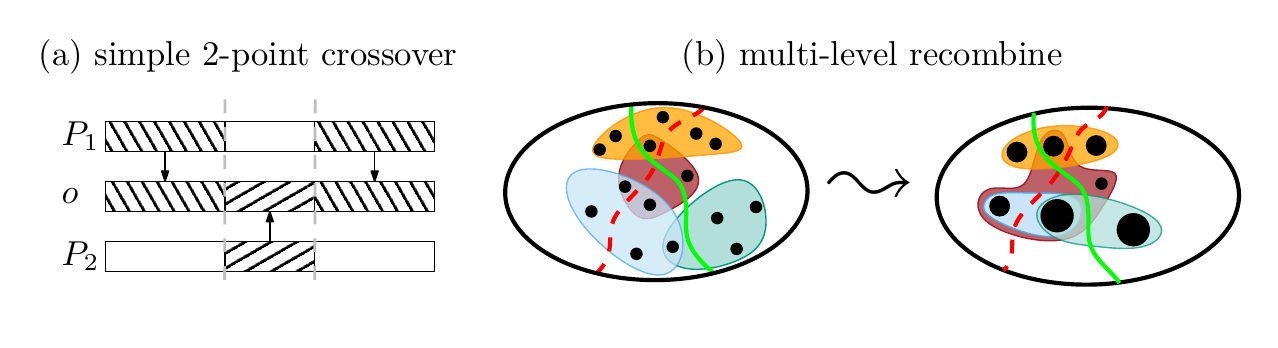}
  \caption[Traditional recombination and modified multi-level coarsening.]{(a) Traditional, problem agnostic crossover operation to combine parent partitions $P_1$ and $P_2$ to offspring $o$.
    (b) Recombination using modified multi-level coarsening to combine two partitions (dashed red line and solid green line).
    Each cut net $e$ remains in the coarse hypergraph and maintains its connectivity $\lambda(e)$ regarding both partitions (source: \cite{KaHyPar-E}).}\label{fig:combine}
\end{figure}
\fi{}

\paragraph{Two-Point Recombination}
The operator starts with selecting parents for recombination using binary tournament selection (without replacement)~\cite{BlickleT96}.
Two individuals $I_1$ and $I_2$ are chosen uniformly at random from $\mathcal{P}$ and
the individual with better fitness  becomes the first parent~$P_1$. We repeat this process to get the second parent $P_2$.
Both parents are then used as input of a modified $n$-level partitioning scheme as follows:
During coarsening, two vertices $u$ and $v$ are only allowed to be contracted if \emph{both parents agree
on the block assignment of both vertices}, i.e., if $b_1[u] = b_1[v] \wedge b_2[u] = b_2[v]$.
This restriction ensures that cut nets $e$ remain in the coarsened hypergraph
\emph{and} maintain their connectivity $\lambda(e)$ regarding \emph{both} parent partitions.
This  allows us to use the partition of the \emph{better} parent as initial partition of the offspring.
We alter the stopping criterion of the coarsening phase such that it stops when no more contractions are possible.
For more details, see the conference version of this algorithm~\cite{KaHyPar-E}.

\ifEnableExtendMemetic
The high-quality solution of the coarsest hypergraph contains two different classes of vertices: Those for which both parent partitions agree on a block assignment
and those for which they do not (see Figure~\ref{fig:combine} (b) for an example).
During the uncoarsening phase, refinement algorithms can then use this initial partitioning to (i) exchange good parts of the solution on the coarse
levels by moving few vertices and (ii) to find the best block assignment for those vertices, for which the parent partitions disagreed.
Since KaHyPar's refinement algorithms guarantee non-decreasing solution quality, the fitness of offspring solutions generated using
this kind of recombination is always \emph{at least as good as the better of both parents}.
\fi{}

\paragraph{Edge-Frequency Multi-Recombination}
The operator described previously is restricted to recombine $p=2$ partitions to improved offspring of non-decreasing quality.
Sanders and Schulz~\cite{kaffpaE} specifically restrict their operators to this case, arguing that in the course of the algorithm
a series of two-point recombine operations to some extend emulates a multi-point recombination.
Here, we present a multi-point recombine operation.
We perform a detailed experimental evaluation to test this hypothesis in \cite{KaHyPar-E} and show that adding type of recombine operations can yield better solutions in practice than just using simple two-way recombination operations.
The recombine operator uses the concept of (hyper)edge frequency~\cite{Wichlund98} to pass information about the cut nets of the $t$ \emph{best individuals} in the population on to new offspring.
The frequency $f(e)$ of a net $e$ hereby refers to the number of times it appears in the cut in the $t$ best solutions: $f(e) := |\{I \in t~|~\lambda(e) > 1\}|$. We use $t = \lceil \sqrt{|\mathcal{P}|} \rceil$,
which is a common value in evolutionary algorithms~\cite{delling2010graph}.
The multi-recombine operator then uses this information to create a \emph{new} individual in the following way. The
coarsening algorithm is modified to prefer to contract vertex pairs $(u,v)$ which share a large number of small, low-frequency nets. This is achieved
by replacing the standard heavy-edge rating function of KaHyPar with the rating function~\cite{Wichlund98} shown in Eq.~\ref{ef}:

\begin{equation}\label{ef}
r(u,v) := \frac{1}{c(v) \cdot c(u)}~\sum \limits_{e \in \{\mathrm{I}(v) \cap \mathrm{I}(u)\}}  \frac{\exp(-\zeta f(e))}{|e|}.
\end{equation}

This rating function disfavors the contraction of vertex pairs incident to cut nets with high frequency, because these nets
are likely to appear in the cut of high-quality solutions. The tuning parameter $\zeta$ is used as a damping factor.
After coarsening stops,  KaHyPar's initial partitioning algorithms are used to compute an initial partition of the coarsest hypergraph, which is then refined
during the uncoarsening and local search phase.

\subsubsection{Mutation Operators} We define two mutation operators based on V-cycles~\cite{WalshawVcycle}.
Both operators are applied to a random individual $I$ of the population.
This approach has been applied successfully as mutation operator
in evolutionary graph partitioning~\cite{kaffpaE}, therefore we also adopt it for HGP.
We reuse an already computed partition as input for the $n$-level approach, by restricting contractions to vertex pairs belonging to the same block.
This way, the existing partition can be carried over to the coarsest hypergraph with equivalent solution quality.
By distinguishing two possibilities for initial partitioning, we define two different
mutation operators: The first one uses the current partition of the individual as initial partition of the coarsest hypergraph and guarantees non-decreasing
solution quality. The second one employs KaHyPar's portfolio of initial partitioning algorithms to compute a \emph{new} solution for the coarsest hypergraph.
During uncoarsening, local search algorithms improve the solution quality and thereby further mutate the individual.
Since the second operator computes a new initial partition which might be different from the original partition of $I$, the fitness of offspring generated by this operator can be worse than the fitness of $I$.

\subsubsection{Replacement Strategy}\label{sec:replace}
All recombination and
mutation operators create one new offspring $o$.  In order to keep the
population diverse, we evict the individual \emph{least different} from $o$ among all individuals
whose fitness is equal to or worse than $o$.
More precisely, for each individual we compute the multi-set $D :=  \{(e, \lambda(e)-1) : e \in E \} $, where $\lambda(e)$ is the number of blocks connected by $e$. The difference between two individuals with multisets $D_1$ and $D_2$, respectively is then the size of the symmetric difference of $D_1$ and $D_2$.
Recall that the multiplicity of an element $e$ in the symmetric difference of $D_1$ and $D_2$ is defined as $\max(m_1,m_2)-\min(m_1,m_2)$ when $m_1$ is the multiplicity of $e$ in $D_1$ and $m_2$ is the multiplicity of $e$ in $D_2$.

\section{Experimental Evaluation}\label{s:experiments}
Our implementations of the proposed algorithms form the core of the $n$-level hypergraph partitioning framework \emph{KaHyPar}
(\textbf{Ka}rlsruhe \textbf{Hy}pergraph \textbf{Par}titioning)\footnote{KaHyPar is available from \url{https://github.com/kahypar/kahypar}}.
KaHyPar supports direct $k$-way partitioning and
recursive bipartitioning, and can optimize the cut-net and connectivity metric. In this evaluation, we focus
on optimizing the connectivity metric using the direct $k$-way approach, since it performs better than recursive bipartitioning and the overall results for cut-net and connectivity optimization are similar~\cite{Schlag20}.

Our experimental evaluation is organized as follows: After discussing the framework configuration in Section~\ref{subs:algorithm_configuration}, we first evaluate the effects of KaHyPar's different algorithmic components on the running
time and solution quality in Section~\ref{sec:kahypar_config}. Then, we compare KaHyPar with current state-of-the-art hypergraph partitioners using a
large real-world benchmark set composed of $488$ hypergraphs and discuss their time-quality tradeoffs in Section~\ref{sec:partitioner_comparison}.
We conclude the evaluation with two case studies on edge partitioning and traditional graph partitioning in Sections~\ref{sec:graph_edge_partitioning} and~\ref{sec:graph_vertex_partitioning}, which demonstrate the robustness of the techniques presented in this paper.

\paragraph{Instances.}
Our main benchmark set (referred to as set A) contains $488$ hypergraphs that are
derived from four different sources:
the ISPD98 VLSI Circuit Benchmark Suite~\cite{ISPD98},
the DAC 2012 Routability-Driven Placement Contest~\cite{DAC2012},
the SuiteSparse Matrix Collection~\cite{FloridaSPM},
and the 2014 SAT Competition~\cite{SAT14Competition}.
We translate sparse matrices to hypergraphs using the row-net model~\cite{DBLP:journals/tpds/CatalyurekA99} and SAT
instances to three different hypergraph representations: \emph{literal}, \emph{primal}, and \emph{dual}~\cite{DPapa2007, DBLP:conf/sat/MannP14}
(for more details see also~\cite{KaHyPar-CA}). All hypergraphs have unit vertex and net weights.
Additionally, we use two different subsets of set A. The first contains $164$ hypergraphs (referred to as set B) and was assembled
such that it reflects the qualitative results of the partitioners on the entire benchmark set~\cite{KaHyPar-R}. The second contains
$100$ hypergraphs (referred to as set C) for which each tested partitioner can compute a partition in under eight hours~\cite{KaHyPar-K}.
The benchmark sets are publicly available~\cite{SchlagBenchmarkSets}
and their detailed statistics can be found in Figure~\ref{fig:benchmark_set} in Appendix~\ref{app:benchmarks}.

\paragraph{System and Methodology.}
The code is written in \Cpp{} and compiled using
\gpp{9.2} with flags \texttt{-O3 -mtune=native -march=native}.
All experiments are performed on a cluster consisting of machines with Intel Xeon Gold 6230 processors 
 running at $2.1$ GHz with $96$GB RAM.
Unless mentioned otherwise, all hypergraphs are partitioned with an allowed imbalance of
$\varepsilon = 0.03$ into $k \in \{2, 4, 8, 16, 32, 64, 128\}$ blocks.
For each value of $k$, a $k$-way partition is considered to be \emph{one} test instance, resulting
in a total of $\numprint{3416}$, $\numprint{1148}$, and $\numprint{700}$ instances for benchmark sets A, B, and C, respectively.
We partition each instance ten times using different random seeds and
aggregate running times and solution quality using the arithmetic mean over all seeds.
To further aggregate over multiple instances, we use the geometric mean for absolute running times.
Runs with imbalanced partitions are not excluded from aggregated running times.
For runs that exceeded a predefined time limit, we use the time limit itself in the aggregates.
Only if all runs of an algorithm exceeded the time limit for a specific instance, we mark it with \ClockLogo~in the plots.
Similarly, if all runs of an algorithm  produced imbalanced partitions on an instance, we mark it with \ding{55}.

\paragraph{Performance Profiles.}
To compare the solution quality of different algorithms, we use \emph{performance profiles}~\cite{DBLP:journals/mp/DolanM02}.
Let $\mathcal{A}$ be the set of all algorithms we want to compare, $\mathcal{I}$ the set of instances, and $q_{A}(I)$ the average quality of algorithm
$A \in \mathcal{A}$ on instance $I \in \mathcal{I}$.
For each algorithm $A$, we plot the fraction of instances ($y$-axis) for which $q_A(I) \leq \tau \cdot \min_{A' \in \mathcal{A}}q_{A'}(I)$, where $\tau$ is on the $x$-axis.
Achieving higher fractions at lower $\tau$-values is considered better.
For $\tau = 1$, the $y$-value indicates the percentage of instances for which an algorithm performs best.
Note that these plots relate the quality of an algorithm to the best solution and thus do not permit a full ranking of three or more algorithms.

\paragraph{Wilcoxon Signed Rank Test.} When performance profiles yield inconclusive results, we additionally perform Wilcoxon signed ranked tests~\cite{wilcoxon} to decide whether or not the differences in solution quality
are statistically significant. At a $1\%$ significance level ($p \le 0.01$), a $Z$-score with $|Z| \ge 2.576$ is considered
significant~\cite[p.~180]{WilcoxonZValues}.

\subsection{Framework Configuration}\label{subs:algorithm_configuration}
\paragraph{Preprocessing.}
Pin sparsification is enabled for hypergraphs with median net size $|\tilde{e}| \ge 28$. The minimum cluster size for
sparsification $c_\text{min}$ is set to two and the maximum cluster size $c_\text{max}$ is set to ten.
For community detection, the edge weighting scheme is chosen dynamically at runtime depending on the edge density $\delta$ of the hypergraph. If $\delta \geq 0.75$, constant edge weights
are used, otherwise we use and edge weigth of $\omega(v,e)=d(v)/|e|$. Furthermore, we restrict the Louvain algorithm to perform at most $100$ iterations on each level and stop the first phase of the
algorithm if the improvement in modularity is below $0.0001$.

\paragraph{Coarsening.}
The threshold parameter $\iota$ for evaluating the rating function is set to $\iota = 1000$. The coarsening process is stopped once the number of vertices drops below $160\cdot k$ (i.e., $t=160$) or no eligible vertex is left.

\paragraph{Initial Partitioning.}
Intitial partitions are computed by executing our $n$-level recursive bipartitioning algorithm on the coarsest hypergraph.
The algorithm uses cut-net splitting when configured to optimize the connectivity metric $\ocon$ and cut-net removal
for $\ocut$-optimization (see Section~\ref{sec:rb_vs_kway}).
The coarsening process continues until the number of vertices drops below $300$ (i.e., $k=2$ and $t=150$). Initial bipartitions
are computed using the portfolio approach. Bipartitions are refined by the $2$-way localized local search algorithm using the simple stopping rule.
At each level, refinement stops after $i=50$ moves neither improved the solution quality nor the current imbalance.

\paragraph{Refinement.}
After computing the initial $k$-way partition via recursive bipartitioning, it is further refined using the localized $k$-way local search
algorithm described in Section~\ref{subs:kway_fm} and the flow-based refinement algorithm described in Section~\ref{ssec:flows}. For the former, the search space is restricted using the adaptive stopping rule.
For the FlowCutter refinement described in Section~\ref{ssec:flows}, we set the maximum weight for vertices to include from block $V_i$ in the flow model to $(1 + \alpha \cdot \epsilon) \frac{c(V)}{k} - c(V_{1 - i})$, with $\alpha = 16$, as in Ref.~\cite{KaHyPar-MF-JEA, whfc_sea}.
The most-balanced-minimum-cut heuristic is repeated 7 times.

\subsection{Evaluating Algorithmic Components}
\label{sec:kahypar_config}
\paragraph{Motivation}
KaHyPar contains three optional algorithmic components that are not part of the core $n$-level algorithm: pin-sparsification, community-aware coarsening, and flow-based refinement.
In order to evaluate the importance of these algorithmic components, we start with the KaHyPar configuration described in Section~\ref{subs:algorithm_configuration}
and successively remove pin sparsification (indicated by suffix $-$S), community-aware coarsening ($-$CAC), and flow-based refinement ($-$F), yielding successively weaker variants. All experiments are performed on benchmark set B.

\paragraph{Results}
Figure~\ref{fig:removal_k_perf_running_times} shows that community-aware coarsening and flow-based refinement substantially improve solution quality at the cost of an increased running time, while pin-sparsification improves running time on some instances at the cost of small quality losses.
In addition to the direct single-shot comparison, we perform \emph{effectiveness tests} using virtual instances.

\paragraph{Virtual Instances}
Since weaker configurations run faster, we create a setting in which each configuration has approximately the same amount of time to compute a $k$-way partition.
More precisely, we use the concept of \emph{virtual instances} \cite{AkhremtsevSS20} to allow the faster configuration to perform additional repetitions.
Given the results of $r$ repetitions of two algorithm configurations $A$ and $B$ for one instance $I$, i.e., a $k$-way partition of a hypergraph $H$, a virtual instance
is computed as follows: First, we choose one repetition of both algorithms uniformly at random. Let $t_1^A$ and $t_1^B$ be the running times of configuration $A$ and
configuration $B$ for that particular repetition, and assume without loss of generality that $t_1^A \geq t_1^B$. We now sample additional repetitions for algorithm $B$ (without
replacement) until the total running time of all sampled repetitions exceeds $t_1^A$, i.e., if the last sample $t_\ell^B$ of algorithm $B$ would exceed $t_1^A$,
it is accepted with probability $p=(t_1^A - \sum_{i \leq i < \ell}t_i^B)/t_\ell^B$. It has been shown that using this approach, the expected running time of the sampled
repetitions of configuration $B$ is the same as the running time of a single repetition of configuration $A$~\cite[Thm.~4.1]{ThesisYaroslav}.
The solution quality of configuration $A$ then corresponds to the quality of the single repetition, while the quality of configuration $B$ is the best
result of all sampled repetitions.
For each of the $1148$ actual instances, we perform $10$ repetitions per configuration. Similar to Ref.~\cite{ThesisYaroslav}, this data is then used to
create 20 virtual instances for each of the $1148$ actual instances -- resulting in a total of $\numprint{22960}$ virtual instances per configuration.\footnote{Only for very
  few instances, slightly more than 10 repetitions would have been needed for the faster configuration. In these cases, we
  restrict the results of the faster configuration to the \emph{best} of the 10 available repetitions, since the impact of those instances on the overall result is considered negligible.}

\begin{figure}[tb!]
  \centering
  \begin{minipage}{.49\textwidth}
  \centering
  \ifpdfplots
    \includegraphics{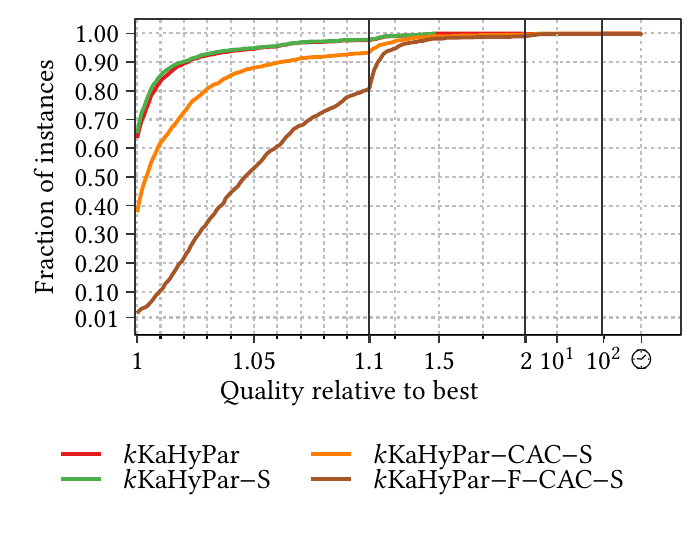}
  \else
    \tikzsetnextfilename{pdf_plots/km1_kahypar_k_component_quality}%
    \input{tikz_plots/km1_kahypar_k_component_quality}%
  \fi

  \end{minipage}
  \begin{minipage}{.49\textwidth}
  \centering
  \vspace{-0.8cm}
  \ifpdfplots
    \includegraphics{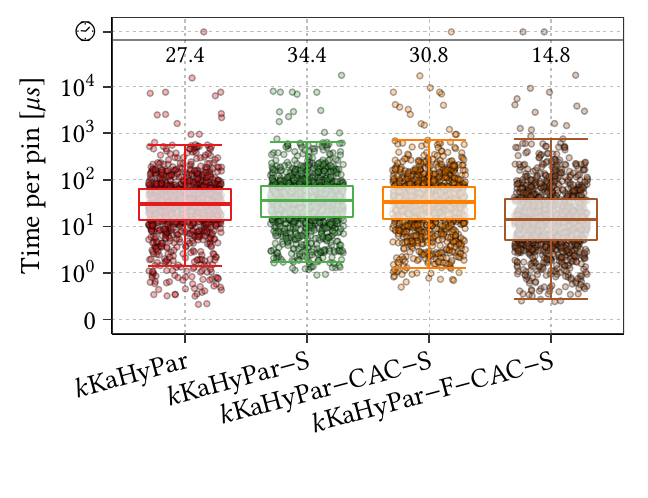}
  \else
    \tikzsetnextfilename{pdf_plots/km1_kahypar_k_running_time}%
    \input{tikz_plots/km1_kahypar_k_running_time}%
  \fi

  \end{minipage}
  \vspace{-1cm}
  \caption[Performance profile and running times of KaHyPar configurations using \emph{actual} instances.]
  {Performance profile and running times of KaHyPar configurations using \emph{actual} instances.
  The geometric mean running times per pin are shown above each box plot.  Experiments are performed on benchmark set B.
  Legend: KaHyPar (baseline), KaHyPar$-$S (without pin sparsification),
  KaHyPar$-$CAC$-$S (without community-aware coarsening and pin sparsification),
  KaHyPar$-$F$-$CAC$-$S (without flow-based refinement, community-aware coarsening and pin sparsification).}
  \label{fig:removal_k_perf_running_times}
\end{figure}

The results of the effectiveness tests on virtual instances for KaHyPar are shown in Figure~\ref{fig:removal_k_eff}.
The advanced configurations are always more effective than weaker configurations. Neither KaHyPar$-$F nor KaHyPar$-$F$-$CAC is able to outperform the respective stronger configuration if given the same amount of time. These results substantiate the significant performance differences
shown in the performance profiles
on the \emph{actual} instances in Figure~\ref{fig:removal_k_perf_running_times}.

\begin{figure}[t!]
  \centering
  \hspace{-1cm}
  \begin{minipage}{.32\textwidth}
  \ifpdfplots
    \includegraphics{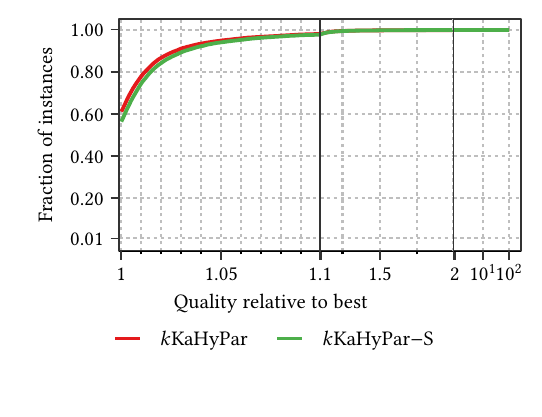}
  \else
    \tikzsetnextfilename{pdf_plots/eff_km1_kahypar_k_s}%
    \input{tikz_plots/eff_km1_kahypar_k_s}%
  \fi

  \vspace{-1cm}
  \end{minipage}
  \hspace{0.25cm}
  \begin{minipage}{.32\textwidth}
  \ifpdfplots
    \includegraphics{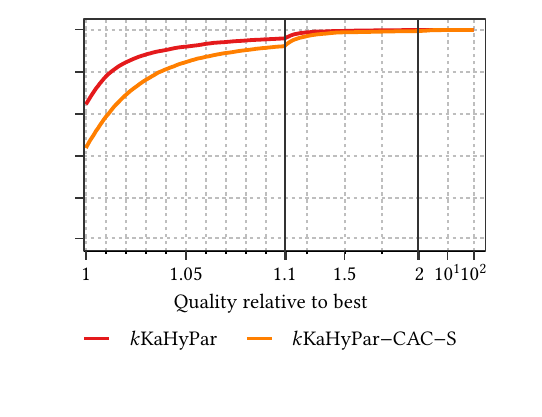}
  \else
    \tikzsetnextfilename{pdf_plots/eff_km1_kahypar_k_cac_s}%
    \input{tikz_plots/eff_km1_kahypar_k_cac_s}%
  \fi

  \vspace{-1cm}
  \end{minipage}
  \begin{minipage}{.32\textwidth}
  \ifpdfplots
    \includegraphics{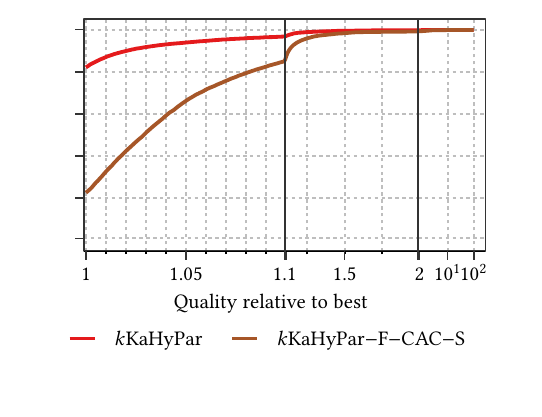}
  \else
    \tikzsetnextfilename{pdf_plots/eff_km1_kahypar_k_f_cac_s}%
    \input{tikz_plots/eff_km1_kahypar_k_f_cac_s}%
  \fi

  \vspace{-1cm}
  \end{minipage}
  \caption[Effectiveness tests for KaHyPar using virtual instances.]
  {Effectiveness tests  on benchmark set B for different KaHyPar configurations using \emph{virtual instances}.
  Legend: KaHyPar (baseline), KaHyPar$-$S (without pin sparsification),
  KaHyPar$-$CAC$-$S (without community-aware coarsening and pin sparsification),
  KaHyPar$-$F$-$CAC$-$S (without flow-based refinement, community-aware coarsening and pin sparsification).}
  \label{fig:removal_k_eff}
 \end{figure}

\subsection{Comparison with Other Systems}
\label{sec:partitioner_comparison}

After discussing our reasoning for choosing a set of \emph{seven} state-of-the-art HGP algorithms as competitors,
we compare the performance of KaHyPar with those systems
-- both in terms of solution quality and running time.
As we will see, the differences in running time between the algorithms can be up to several orders of magnitude.
In Section~\ref{sec:repeated_executions}, we therefore evaluate a subset of the best performing algorithms in
a setting where each algorithm is given the same -- large -- amount of time to compute a solution for each instance.
Furthermore, we use this section to demonstrate the effectiveness of our memetic algorithm presented in Section~\ref{subs:memetic_algo}.

\paragraph{Partitioning Systems}\label{sec:choosing_competitors}
We compare KaHyPar with the $k$-way (hMETIS-K) and recursive bipartitioning variants (hMETIS-R)
of hMETIS 2.0 (p1)~\cite{hMETIS-Software},
PaToH 3.2~\cite{PaToH-Software} using both the default (PaToH-D) and the quality preset (PaToH-Q), Zoltan-AlgD~\cite{ZoltanAlgD-Software},
Mondriaan version 4.2.1~\cite{Mondriaan-Software}, and HYPE~\cite{HYPE-Software}.
We choose these tools for the following reasons: PaToH produces better quality than Zoltan's native parallel hypergraph partitioner (PHG) in sequential mode~\cite{ZoltanUserGuide,DBLP:conf/ipps/DevineBHBC06}.
The publicly available version of Parkway~\cite{DBLP:journals/jpdc/TrifunovicK08} on GitHub crashes on most of our instances, but was found to be comparable to Zoltan's PHG~\cite{DBLP:conf/ipps/DevineBHBC06}.
The algebraic distance-based coarsening algorithm of Zoltan-AlgD has been shown to improve the performance of Zoltan's PHG in sequential mode~\cite{shaydulin_et_al:LIPIcs:2018:8937,DBLP:journals/corr/abs-1802-09610,doi:10.1137/17M1152735,EdgePartitioning}.
MLPart is restricted to bipartitioning~\cite{DBLP:conf/aspdac/CaldwellKM00,DBLP:conf/ispd/CongRX03} and was outperformed by both hMETIS~\cite{MLPartVshMetis}
and PaToH~\cite{PaToHvsMLPart}.
The performance of SHP~\cite{DBLP:journals/pvldb/KabiljoKPPSAP17} is deemed comparable to the performance of Zoltan and Mondriaan~\cite{DBLP:journals/pvldb/KabiljoKPPSAP17}.
UMPa~\cite{DBLP:conf/dimacs/CatalyurekDKU12} does not improve on PaToH when optimizing single objective functions that do not benefit from the directed hypergraph model~\cite{DBLP:journals/jpdc/DeveciKUC15}.
Furthermore, kPaToH~\cite{DBLP:journals/jpdc/AykanatCU08} did not perform better than PaToH in preliminary experiments~\cite{KaHyPar-K}.
We exclude our recent parallel algorithms~\cite{GHSS21, GHSS21b} since their components are parallel versions of the components in KaHyPar (though flow-based refinement is still missing). Here, we focus on evaluating these components in a sequential setting and comparing them with other sequential codes.

  \paragraph{Repetitions and Time Limit.}
  We perform ten repetitions with different seeds for all algorithms except for HYPE which is not randomized and hence
  always computes the same results and thus uses a single repetition. Each partitioner had a time limit of eight hours \emph{per} instance and seed.

  \subsubsection{Solution Quality}
  Figure~\ref{fig:km1_perf_running_time_overview} summarizes the results.
  The performance profile plot on the left shows that KaHyPar outperforms the competing algorithms by a large margin,
  as it computes the best partitions for $68.4\%$ of all benchmark instances, and its solution quality is within a factor of $1.1$ of the best algorithm in $94\%$ of all cases.

  Comparing the performance profiles of PaToH-Q with the performance profile of Mondriaan, we can see that PaToH-Q performs better than Mondriaan.
  The performance difference of PaToH-D and Mondriaan is also statistically significant (result of the Wilcoxon signed ranked test is $|Z| = 7.1911$ and $p \le 6.428e-13$), which confirms
  the results of previous studies that suggested that Mondriaan's hypergraph partitioner can be seen as inferior to PaToH~\cite{bisseling2012two,PaToHBetterThanMondriaan}.
  Moreover, we note that
  HYPE (the only non-multi-level algorithm) performs considerably worse than the multi-level systems. This echoes the intuition that by providing a more
  global view of the partitioning problem on coarser levels, multi-level approaches enable local search algorithms to explore local solution spaces very effectively.
  While the performance profile of hMETIS-R is within a factor of $1.1$ of the best algorithm for more than $76.7\%$ of all instances, there are some
  instances for which it performs significantly worse than the best. Although less solutions of PaToH-Q and PaToH-D are within a factor
  $1.1$ from the best, the performance profiles indicate that the worst quality ratios of PaToH-Q and PaToH-D are smaller than those of hMETIS-R.
  Note that the same effect is also visible when comparing Zoltan-AlgD with PaToH-D or Mondriaan.

\begin{figure}[tb!]
  \centering
  \begin{minipage}{.49\textwidth}
    \centering
  \ifpdfplots
    \includegraphics{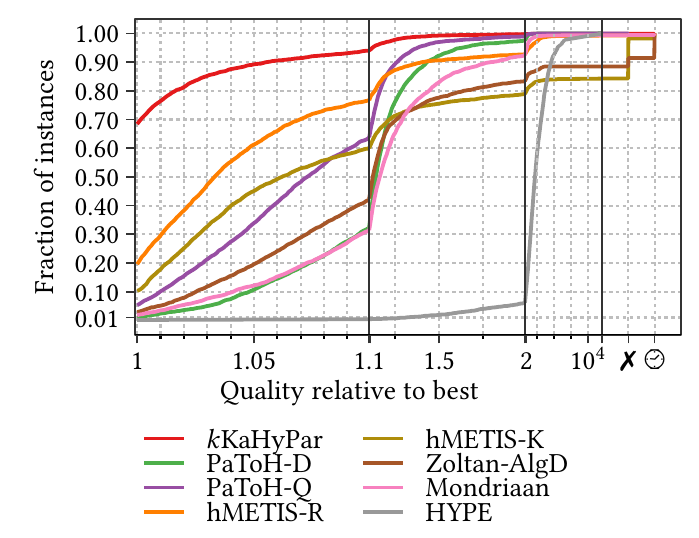}
  \else
    \tikzsetnextfilename{pdf_plots/km1_overall_only_k_kahypar}%
    \input{tikz_plots/km1_overall_only_k_kahypar}%
  \fi

  \vspace{-0.5cm}
  \end{minipage}
  \begin{minipage}{.49\textwidth}
  \centering
  \vspace{-0.8cm}
  \ifpdfplots
    \includegraphics{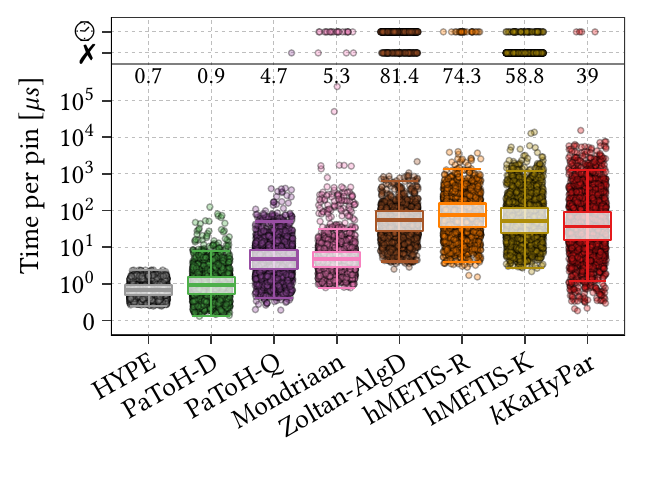}
  \else
    \tikzsetnextfilename{pdf_plots/km1_running_time_overall}%
    \input{tikz_plots/km1_running_time_overall}%
  \fi

  \vspace{-0.5cm}
  \end{minipage}
  \caption[Performance profile and running times comparing KaHyPar with other partitioners for connectivity optimization.]
  {Performance profile (left) and running times (right) comparing KaHyPar with other partitioners for connectivity optimization.
  The geometric mean running times per pin are shown above each box plot.
  }
  \label{fig:km1_perf_running_time_overview}
\end{figure}

\subsubsection{Running Time}

The running times are shown in Figure~\ref{fig:km1_perf_running_time_overview} (right). We see that although
being based on the $n$-level paradigm and employing more complex techniques such as community-aware coarsening
and flow-based refinements, the running times of KaHyPar are comparable to the running times of both
hMETIS configurations. Furthermore, KaHyPar is faster than Zoltan-AlgD on average.
Note that for both PaToH-Q and PaToH-D, as well as for Mondriaan and HYPE, the median running time is more than an
order of magnitude smaller than the median running times of the other multi-level systems.

\subsubsection{Infeasible Results}
Looking at infeasible solutions, we see that Zoltan-AlgD computes imbalanced solutions for around $2.9\%$ of all instances and that more than $14\%$
of all partitions computed by hMETIS-K are imbalanced. The fact that hMETIS-K often produces imbalanced partitions was also
observed in the graph partitioning experiments of \citet[p.~128]{DBLP:phd/dnb/Schulz13a}. A possible explanation for this behavior could be
the fact that, according to the related publications~\cite{hMETISKTR,DBLP:journals/vlsi/KarypisK00},
hMETIS-K does not limit the maximal vertex weight during the coarsening phase (neither indirectly via a penalty factor in the
rating function nor directly via hard weight constraints). This, in turn, could lead to many heavy vertices at the coarser levels of the hierarchy,
which makes it harder for the initial partitioning algorithms to compute balanced partitions.

For KaHyPar ($4$), hMetis-R ($26$), hMetis-K ($60$), Mondriaan ($19$) and Zoltan-AlgD ($294$), there are some instances that could not be partitioned withing the given time limit of eight hours.
Additionally, Mondriaan aborted partitioning for 4 instances in total.

\subsubsection{The Time/Quality Trade-Off}
The previous section showed that KaHyPar computes superior solutions for most instances, however at the cost of higher running times than, for example, PaToH.
Figure~\ref{fig:km1_cut_tradeoff} concisely summarizes the trade-off between solution quality and running time for connectivity optimization on a per-instance basis. The plot shows one point
for each instance. The $x$-axis gives
the running time ratio $x=K/T$ where $K$ is the running time of KaHyPar and $T$ is the running time of the compared algorithm.
Similarly, the $y$ axis shows $y=k/c-1$ where $k$ is the connectivity value obtained by KaHyPar
and $c$ is the value obtained by the compared algorithm.
Points above zero correspond to instances where the solution of the respective partitioner was better than the
solution of KaHyPar, while for points below zero KaHyPar produced solutions of higher quality.
Imbalanced and thus infeasible solutions are plotted at the $y$ position labeled with \ding{55}.

KaHyPar seems to be the method of choice for high-quality partitioning as it performs better than hMETIS-R, hMETIS-K, and Zoltan-AlgD -- computing solutions of higher quality in a comparable amount of time for most instances.
If running time is more important than solution quality, PaToH offers the best trade-off out of the algorithms in this evaluation, however it is outperformed in both quality and running time by one of our recent parallel algorithms~\cite{GHSS21}.

 \begin{figure}[tb!]
  \centering
  \begin{minipage}{\textwidth}
  \ifpdfplots
    \includegraphics{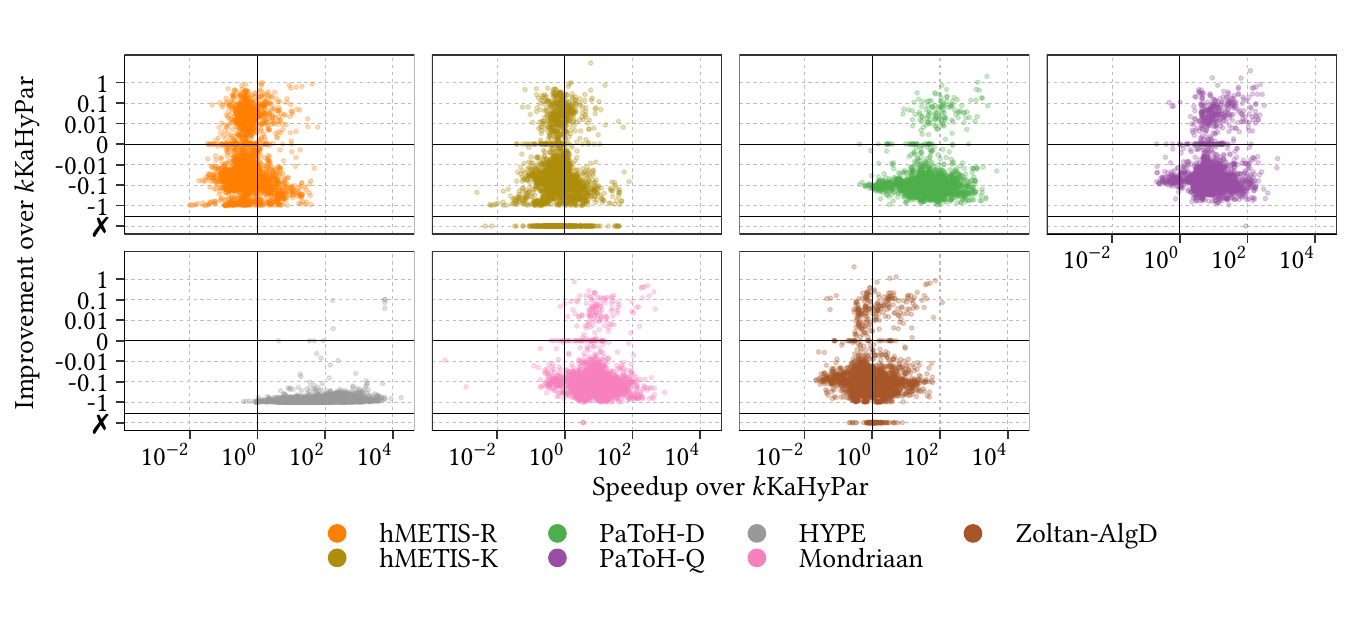}
  \else
    \tikzsetnextfilename{pdf_plots/km1_tradeoff}%
    \input{tikz_plots/km1_tradeoff}%
  \fi

  \vspace{-0.5cm}
  \end{minipage}
  \caption[Visualization of the trade-off between running time and solution quality for connectivity optimization
  and cut-net optimization.]{Visualization of the trade-off between running time and solution quality for connectivity optimization (top)
  and cut-net optimization (bottom). The values of all algorithms are relative to KaHyPar.}\label{fig:km1_cut_tradeoff}
\end{figure}

\subsubsection{Memetic Algorithm and Repeated Executions}\label{sec:repeated_executions}
\paragraph{Motivation}
In the experiments presented in the previous section, each partitioning algorithm was executed the same number of times for each instance
(i.e., ten times with different random seeds). However, we have seen that the difference in running time between algorithms can be
up to three orders of magnitude. In this section, we therefore investigate the partitioning performance in a setting where each
algorithm is given the \emph{same} fairly large amount of time to partition each instance -- thus trying to answer the question whether multiple
repetitions of a very fast algorithm can yield similar or even better results than few repetitions of a slower, more advanced algorithm.
Furthermore, this evaluation is relevant for tasks such as  application-specific integrated circuit (ASIC) design, where one can afford to run the
partitioner for hours or days, since it will take weeks to create the final implementation~\cite{DHauBo97}.
The memetic algorithm from Section~\ref{subs:memetic_algo} is particularly relevant in this context.

\paragraph{Methodology}
The experiments in this section were done on benchmark set C.
For each instance, every partitioning algorithm was given
\emph{eight} hours time to compute a solution. We performed five repetitions with different seeds for each test
instance and algorithm. Due to the large amount of computing time necessary to perform these experiments, we always partitioned $16$ instances in
parallel on a single node of the compute cluster.\footnote{Compared to partitioning a single instance on a single node, we did not observe considerably different results.}
For simplicity, we refer to the memetic algorithm as KaHyPar-E in this section.
While all non-evolutionary algorithms \emph{repeatedly partition} each instance until the time limit is reached, KaHyPar-E evolves a population
of solutions.

\paragraph{Experimental Evaluation}
The performance profiles in Figure~\ref{fig:convergence_partitioning} summarize the experimental results.
The plot on the left is based on the \emph{best} solutions computed by all algorithms after repeateldy partitioning each instance for eight hours.
We see that KaHyPar-E is able to effectively explore the global solution space
 -- computing the best partitions for $94.6\%$ of all instances while never being more than a factor of $1.062$ worse than the best algorithm.
 However, even in this setting (and even competing with KaHyPar-E), KaHyPar computes the best solutions for $20.6\%$ of the instances.

Furthermore, we observe that PaToH-Q seems to compare much more favorably with hMETIS than in the previous single-shot comparison.
However, the Wilcoxon signed ranked test reveals that hMetis-R produces statistically significant better partitions than
PaToH-Q ($|Z| = 6.9935,p \le 2.68e-12$), but PaToH-Q outperforms hMetis-K ($|Z| = 4.7788,p \le 1.763e-06$).
The results clearly indicate that it does not suffice to use the best solutions of repeated executions of a faster partitioner
to achieve the same solution quality as KaHyPar or KaHyPar-E.

 \begin{figure}[t!]
  \centering
  \hspace{-0.75cm}
  \begin{minipage}{.49\textwidth}
  \ifpdfplots
    \includegraphics{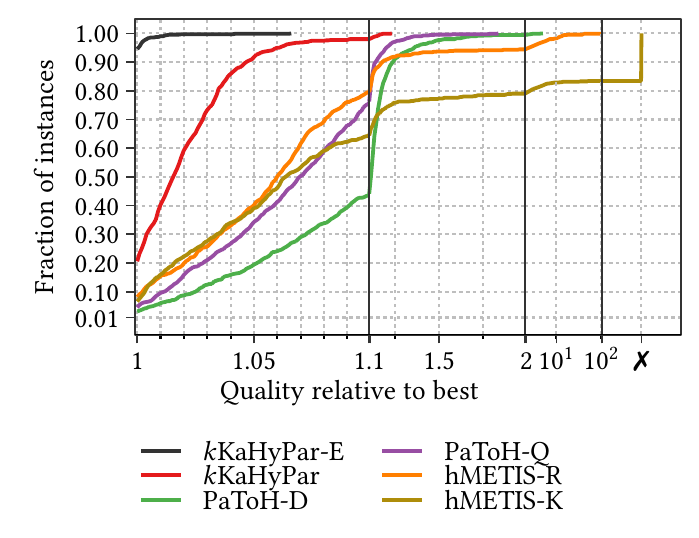}
  \else
    \tikzsetnextfilename{pdf_plots/repeated_executions_min_km1_overall}%
    \input{tikz_plots/repeated_executions_min_km1_overall}%
  \fi

  \vspace{-0.5cm}
  \end{minipage}
  \begin{minipage}{.49\textwidth}
  \ifpdfplots
    \includegraphics{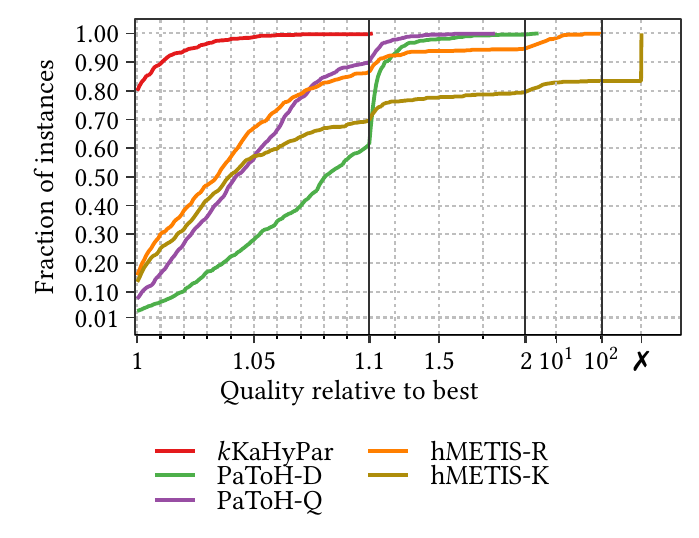}
  \else
    \tikzsetnextfilename{pdf_plots/repeated_executions_min_km1_overall_1}%
    \input{tikz_plots/repeated_executions_min_km1_overall_1}%
  \fi

  \vspace{-0.5cm}
  \end{minipage}
  \caption[Performance profile comparing KaHyPar and KaHyPar-E with other partitioners.]{Performance profile comparing KaHyPar and KaHyPar-E with other partitioners on benchmark set C. The left plot uses the \emph{best} result that each system computed after partitioning
  each instance for eight hours. The right plot compares the very first results of KaHyPar (reported during the eight hour time period) to the \emph{best} results produced
  by the other algorithms.}\label{fig:convergence_partitioning}
\end{figure}

 In Figure~\ref{fig:convergence_partitioning} (right), we strengthen this argument for KaHyPar by comparing the \emph{best} solutions found by the competing algorithms with
 repeated partitioning to the solutions of \emph{single} partitioning calls to KaHyPar
 (i.e., the \emph{first} results reported for each instance and seed during the eight hour time period).
 In this setting, KaHyPar still performs better than the other partitioning systems -- computing the best solutions for around $80\%$ of all benchmark instances.

 \subsection{Case Study: Graph Edge Partitioning}\label{sec:graph_edge_partitioning}
\paragraph{Motivation}
Traditional node-based graph partitioning has been essential for making distributed graph algorithms efficient
in the Think-Like-A-Vertex model of computation~\cite{mccune-tlav-2015}. In this model, node-centric operations are performed
in parallel by mapping nodes to processing elements (PEs) and executing node computations in parallel. Nearly all algorithms in this model require information
to be communicated between neighbors -- which results in network communication if stored on different PEs -- and therefore high-quality
graph partitioning directly translates into less communication and faster overall running time.
However, node-centric computations have serious shortcomings on power-law graphs, which have a skewed
degree distribution. In such networks, the overall running time is negatively affected by nodes with very high degrees,
which can result in more communication steps. To combat these effects, \citet{gonzalez-powergraph-2012}
introduced edge-centric computations, which duplicate node-centric computations across edges to reduce
communication overhead. In this model, \emph{edge partitioning} -- partitioning the edge set of a graph into
roughly equally-sized blocks while minimizing node replications -- must be used to reduce the overall running time.
However, like node-based partitioning, edge partitioning is NP-hard~\cite{bourse-2014}.
Since the problem can be solved directly via hypergraph partitioning, we use it as a case study to demonstrate
the performance of KaHyPar on instances that were \emph{never} used during the development or the tuning of the framework's algorithmic components.

\paragraph{The Edge Partitioning Problem}
Let $G=(V,E,c,\omega)$ be an undirected, weighted graph.
Similar to the node partitioning problem, the \emph{edge partitioning problem} asks for
blocks of edges $E_1, \dots, E_k$ that partition $E$, i.e., $E_1 \cup \dots \cup E_k = E$, $E_i \neq \emptyset$ for $1\leq i \leq k$, and $E_i \cap E_j = \emptyset$ for $i \neq j$.
The \emph{balance constraint} demands that $\forall i \in \{1..k\}: \omega(E_i) \leq (1 + \varepsilon) \ceil{\frac{\omega(E)}{k}}$.
The objective is to minimize the \emph{vertex cut} $\sum_{v \in V} |VC(v)| - 1$ where $VC(v) := \{ i : \incnets(v) \cap E_i \neq \emptyset \}$.
Intuitively, the objective expresses the number of required \emph{replicas} of nodes:
If a node $v$ has to be copied to each block that has edges incident to $v$,
the number of replicas of that node is $|VC(v)| - 1$.

Li et al.~\cite{li2017spac} noted that an edge partition of a graph $G$ can be computed by transforming $G$ into a hypergraph $H$,
 partitioning $H$ into $k$ blocks while optimizing the connectivity metric  $\ocon$, and then using the hypergraph partition to infer an edge
partition of $G$. The hypergraph contains a vertex for each graph edge $e \in E$, and a hyperedge for each graph node $v \in V$, which contains the
graph edges to which the corresponding node is incident. An example of this approach is shown in Figure~\ref{fig:edge_partitioning_example}.

 \begin{figure}[t!]
   \centering
   \includegraphics[width=0.8\textwidth]{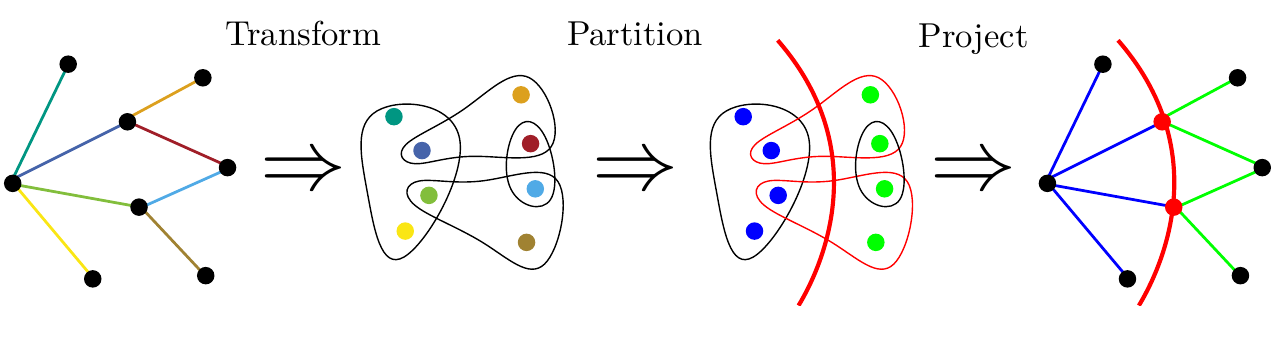}
   \caption[The edge partitioning approach of Li et al.~\cite{li2017spac}.]{Computing an edge partition via hypergraph partitioning as proposed by Li et al.~\cite{li2017spac}. After transforming the graph into a hypergraph and computing
a connectivity-optimized hypergraph partition, the hypergraph solution induces a partition of the edge set of the graph.}\label{fig:edge_partitioning_example}
\end{figure}

In our conference paper~\cite{EdgePartitioning}, we present a fast parallel auxiliary graph construction algorithm in the \emph{distributed} setting,
that -- combined with advanced parallel node partitioning algorithms -- yields high-quality edge partitions in a scalable way.
Here, we restrict our focus to the usage of \emph{sequential} hypergraph partitioning algorithms to solve the edge partitioning problem directly.

\paragraph{Instances and Methodology.}
In the evaluation, we use a benchmark set of
46 hypergraphs (set D) which are derived from a set of benchmark graphs as described above. More precisely, we
use all instances of the Walshaw standard graph partitioning benchmark~\cite{DBLP:journals/jgo/SoperWC04},
SPMV graphs~\cite{li2017spac}, and random hyperbolic rhgX graphs. SPMV graphs are
bipartite locality graphs for sparse matrix vector multiplication (SPMV), which were
used in the evaluation of Li et al.~\cite{li2017spac}. Given an $n\times n$ matrix $M$ (in our case the
adjacency matrix of the corresponding graph), an SPMV graph corresponding to an
SPMV computation $Mx = y$ consists of $2n$ vertices representing the $x_i$ and $y_i$ vector
entries and contains an edge $(x_i , y_j )$ if $x_i$ contributes to the computation of $y_j$ , i.e., if
$M_{ij} \neq 0$. The rhgX graphs were chosen since their degree distributions follow a power
law (and they are thus targeted by edge partitioning techniques). They are generated
using KaGen~\cite{kagen} with a power law exponent of $2.2$ and an average degree of $8$.

For each instance and each algorithm (except HYPE), we perform five repetitions with different seeds. As before,
for HYPE, we report the results of one iteration using the default configuration, since employing randomization did not improve solution quality~\cite{KaHyPar-MF-JEA}.

\paragraph{Results}
The results of our experiments are summarized in Figure~\ref{fig:edge_partitioning_results}. Considering the performance
profile plot in Figure~\ref{fig:edge_partitioning_results} (left), we see that out of all partitioning algorithms KaHyPar again performs best -- computing the best edge partitions
on $82.3\%$ of all benchmark instances. KaHyPar is never more than a factor of $1.13$ worse than the best algorithm.
Interestingly, and in contrast to the results presented in Section~\ref{sec:partitioner_comparison}, the performance difference between hMETIS-R
and its direct $k$-way counterpart hMETIS-K is small in this case study. This could be seen as an indication that, whenever
hMETIS-K is able to compute feasible solutions, its solution quality could be comparable to that of hMETIS-R.

\begin{figure}[t!]
  \centering
  \begin{minipage}{.49\textwidth}
  \centering
  \ifpdfplots
    \includegraphics{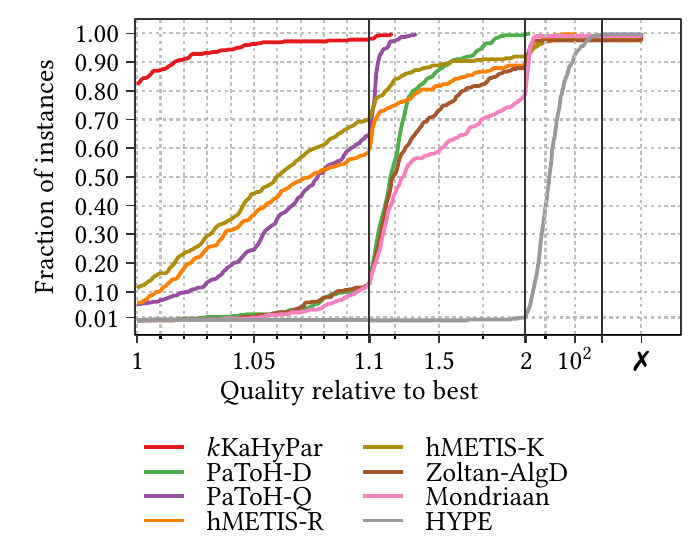}
  \else
    \tikzsetnextfilename{pdf_plots/edge_partitioning_km1_kahypar_k}%
    \input{tikz_plots/edge_partitioning_km1_kahypar_k}%
  \fi

  \vspace{-0.5cm}
  \end{minipage}
  \begin{minipage}{.49\textwidth}
  \centering
  \vspace{-.81cm}
  \ifpdfplots
    \includegraphics{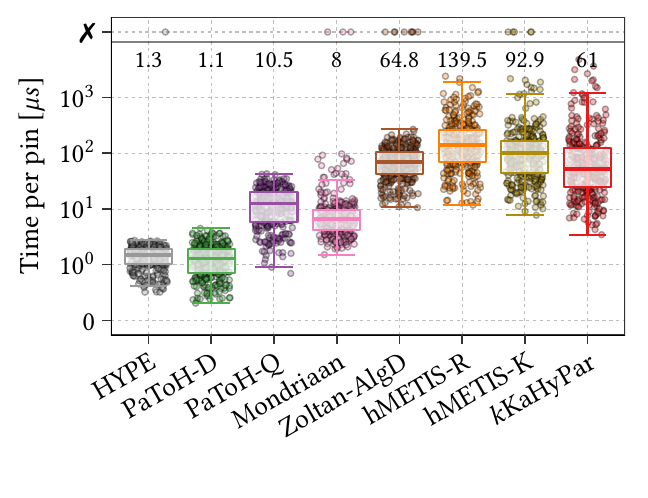}
  \else
    \tikzsetnextfilename{pdf_plots/edge_partitioning_running_time_overall}%
    \input{tikz_plots/edge_partitioning_running_time_overall}%
  \fi

  \vspace{-0.5cm}
  \end{minipage}
  \caption[Solution quality and running times for edge partitioning.]
  {Comparing the results of all algorithms for the edge partitioning experiments on benchmark set D: solution quality (left) and running time (right).}
  \label{fig:edge_partitioning_results}
\end{figure}

The Wilcoxon signed ranked test indicates that the difference between both hMETIS configurations and PaToH-Q is
not statistically significant ($|Z| = 0.4833, p = 0.6289$ for hMetis-R and $|Z| = 1.2917, p = 0.1965$ for hMetis-K).
Figure~\ref{fig:edge_partitioning_results} (right) shows that
the running times of both KaHyPar configurations are again comparable to that of hMETIS-K, hMETIS-R, and Zoltan-AlgD.
As before, both PaToH configurations, as well as HYPE and Mondriaan are considerably faster than the other partitioning tools.

\paragraph{Concluding Remarks}
The experimental results yield similar conclusions as the experiments on benchmark set A.
Since benchmark set D was not used during the development of KaHyPar, this
provides further evidence of our claim that KaHyPar can be seen as the state of the art
for high-quality hypergraph partitioning.

\subsection{Case Study: Traditional Graph Partitioning}\label{sec:graph_vertex_partitioning}
\paragraph{Motivation.}
Since hypergraph partitioning is a generalization of graph partitioning, we now compare KaHyPar with the graph partitioner KaFFPa~\cite{kaffpa} from the KaHIP framework~\cite{sandersschulz2013} in order to evaluate how good our algorithm performs for traditional graph partitioning tasks.
We use both the \texttt{strong} (KaFFPa-Strong) and the \texttt{strongsocial} (KaFFPa-StrongS) configurations, and use the improved flow network described in Ref.~\cite{KaHyPar-MF-JEA}.
The former configuration is designed for high-quality partitions of mesh graphs, while the latter is targeted at partitioning complex networks such as web graphs and social networks.
Note that KaFFPa achieved the highest quality partitions in the DIMACS challenge on graph partitioning~\cite{DBLP:conf/dimacs/2012}.
Hence, we compare KaHyPar to a state-of-the-art system that uses similar techniques (except the preprocessing and the $n$-level approach).

\paragraph{Instances and Methodology.}
We use the 21 large web graphs and
social networks used in the work of Meyerhenke et al.~\cite{DBLP:conf/wea/MeyerhenkeSS14} (benchmark set E),
and the graphs used in the final evaluation of the 10th DIMACS Implementation
Challenge on Graph Partitioning and Graph Clustering~\cite{DBLP:conf/dimacs/2012} (benchmark set F).
We excluded graph uk-2007-05 from benchmark set F, because no algorithm involved
in our experimental evaluation was able to partition that graph on our system.
In order to be consistent with the experiments presented in the previous sections, we impose a time limit of eight hours per instance for each algorithm.

\paragraph{Comparison with KaHyPar -- Complex Networks.}
The experimental results for the web graphs and social networks of benchmark set E are summarized in Figure~\ref{fig:snw_results}.
Considering the performance profiles in Figure~\ref{fig:snw_results} (left), we see that KaHyPar computes the best solutions for $58.5\%$ of all instances.
It is followed by KaFFPa-StrongS ($29.3\%$), and KaFFPa-Strong ($12.2\%$).
Figure~\ref{fig:snw_results} (right) shows that KaHyPar is slightly faster on average than
KaFFPa-StrongS, which could not partition one instance within the time limit.

\begin{figure}[t!]
  \centering
  \begin{minipage}{.49\textwidth}
  \ifpdfplots
    \includegraphics{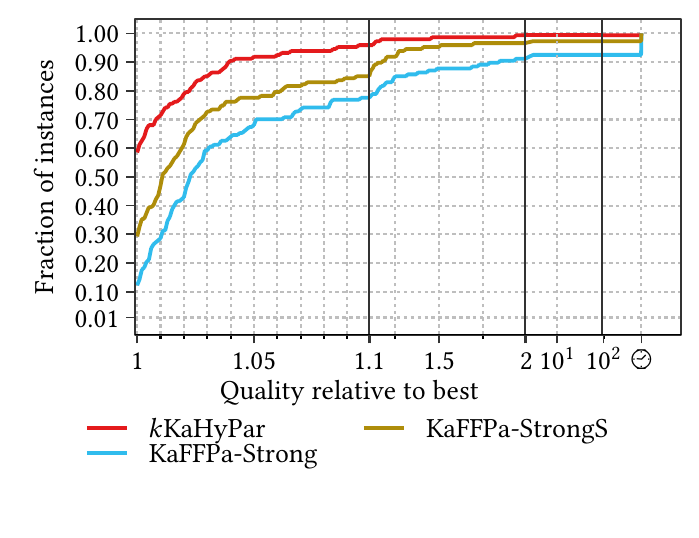}
  \else
    \tikzsetnextfilename{pdf_plots/cut_web_graphs_kahypar_k_vs_kaffpa}%
    \input{tikz_plots/cut_web_graphs_kahypar_k_vs_kaffpa}%
  \fi

  \vspace{-1cm}
  \end{minipage}
  \begin{minipage}{.49\textwidth}
  \vspace{-0.1cm}
  \ifpdfplots
    \includegraphics{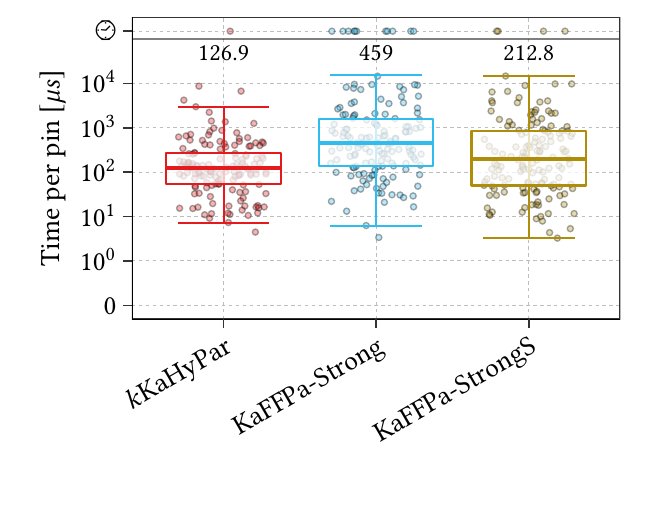}
  \else
    \tikzsetnextfilename{pdf_plots/cut_web_graphs_running_time_kahypar_vs_kaffpa}%
    \input{tikz_plots/cut_web_graphs_running_time_kahypar_vs_kaffpa}%
  \fi

  \vspace{-1cm}
  \end{minipage}
  \caption[Solution quality and running times of KaHyPar, KaFFPa-Strong, and KaFFPa-StrongS for benchmark set E.]
  {Solution quality (left) and running times (right) of KaHyPar, KaFFPa-Strong, and KaFFPa-StrongS for benchmark set E (social, web).}\label{fig:snw_results}
\end{figure}

\paragraph{Comparison with KaHyPar -- DIMACS Graphs.}
The results for benchmark set F are summarized in Figure~\ref{fig:dimacs_results}.
The performance profiles depicted in Figure~\ref{fig:dimacs_results} (left) show that none of the algorithms were able to partition all instances,
because the partitioners could not finish within the time limit of eight hours.
We see that KaFFPa-Strong resp.~KaFFPa-StrongS compute the best solutions for $29.4\%$ resp.~$35.3\%$ of all instances, while
KaHyPar computes the best solutions in $37\%$ of all cases.
Looking at Figure~\ref{fig:dimacs_results} (right), we see that KaFFPa-Strong is faster than KaHyPar and KaHyPar is faster than KaFFPa-StrongS.

\begin{figure}[t!]
  \centering
  \begin{minipage}{.49\textwidth}
  \ifpdfplots
    \includegraphics{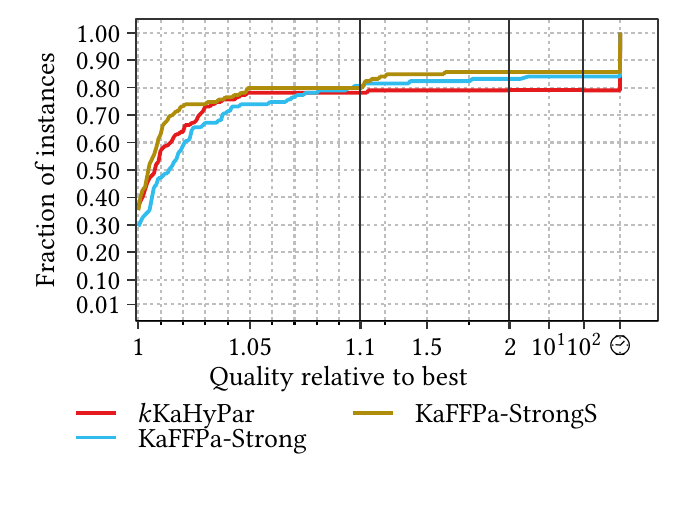}
  \else
    \tikzsetnextfilename{pdf_plots/cut_dimacs_kahypar_k_vs_kaffpa}%
    \input{tikz_plots/cut_dimacs_kahypar_k_vs_kaffpa}%
  \fi

  \vspace{-1cm}
  \end{minipage}
  \begin{minipage}{.49\textwidth}
    \vspace{-0.1cm}
  \ifpdfplots
    \includegraphics{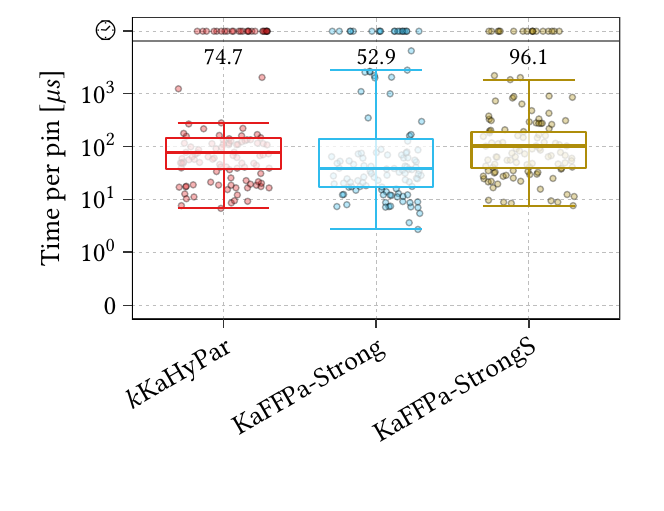}
  \else
    \tikzsetnextfilename{pdf_plots/cut_dimacs_running_time_kahypar_vs_kaffpa}%
    \input{tikz_plots/cut_dimacs_running_time_kahypar_vs_kaffpa}%
  \fi

  \vspace{-1cm}
  \end{minipage}
  \caption[Solution quality and running times of KaHyPar, KaFFPa-Strong, and KaFFPa-StrongS for benchmark set F.]
  {Solution quality (left) and running times (right) of KaHyPar, KaFFPa-Strong, and KaFFPa-StrongS for benchmark set F (Dimacs).}\label{fig:dimacs_results}
\end{figure}

 \paragraph{Concluding Remarks}
Given these results, we conclude that KaHyPar is also effective in the context of graph partitioning.
In a comparison with the strongest KaFFPa configurations, it computes solutions of slightly higher quality
for complex networks, and solutions of similar quality for the DIMACS graphs in a comparable amount of time.

\section{Conclusions and Future Work}\label{s:conclusions}
We have shown how a careful combination of different heuristics, e.g.
locality sensitive hashing, Louvain clustering, portfolio-based initial partitioning, localized $k$-way
local search, flow techniques, V-cycles, and memetic
algorithms can be integrated
within the framework of an $n$-level algorithm to make KaHyPar the
currently highest quality hypergraph partitioner. Implementation
techniques like a dynamic hypergraph data structures and lazy update
techniques make this sufficiently fast to process tens of thousands of
pin per second.  This is fast enough for many combinatorial
optimization applications like VLSI design or quantum circuit simulation~\cite{gray2021hyper}.
For applications like sparse matrix multiplication, where improved cuts translate into
relatively small savings in running time, systems like PaToH currently offer a better cost--quality trade-off.
Here, using KaHyPar is only warranted if its cost can be amortized over many iterations.
Thus, an interesting direction for future research is improving performance
while preserving as much as possible of the achieved
quality. \emph{Parallelization} seems like a promising approach to
achieve this. However, in the past, parallel (hyper)graph partitioners
paid a quality penalty for improved performance. We are intensively
working on changing this. Our first step was done for shared-memory \emph{graph}
partitioning \cite{AkhremtsevSS20} where we achieve quality comparable
to a state-of-the art graph partitioner using corresponding components.
Our recently published parallel hypergraph
partitioner Mt-KaHyPar~\cite{GHSS21} outperforms PaToH-Q and is also faster
than PaToH-D when using a moderate number of threads.
We are also working on parallelizing the $n$-level approach. In a first study~\cite{GHSS21b} it achieves
similar quality as KaHyPar without flows and is also an order of magnitude faster with ten threads.
A key remaining step to achieve the quality of KaHyPar is the integration of parallel flow computations.
For very large instances, also a distributed-memory parallelization seems interesting.

A different cost--quality trade-off might be achievable by using a
KaHyPar-like system not on the input hypergraph but on a considerably
contracted hypergraph. Of course, also achieving even higher quality than
KaHyPar will remain interesting. Many improvements are
conceivable, such as integrating machine learning or integer linear programming.
For small imbalance parameters,
generalizing negative cycle detection techniques previously used for
graph partitioning \cite{sandersschulz2013} could be interesting.




\begin{acks}
  The authors would like to thank the additional co-authors of previous conference publications, namely Robin Andre, Michael Hamann, Vitali Henne, Henning Meyerhenke, Daniel Seemaier, Darren Strash, and Dorothea Wagner. This work was partially supported by DFG grants WA654/19-
  2, SA933/10-2, SA933/11-1, SCHU 2567/1-2, and the Leibniz prize for Peter Sanders. The authors acknowledge further support by the state of Baden-Württemberg through bwHPC.
\end{acks}

\bibliographystyle{ACM-Reference-Format}
\bibliography{references}

\appendix

\clearpage
\section{Benchmark Statistics}
\label{app:benchmarks}

\begin{figure*}[!htb]
	\centering
	\vspace{-1cm}
  \ifpdfplots
    \includegraphics{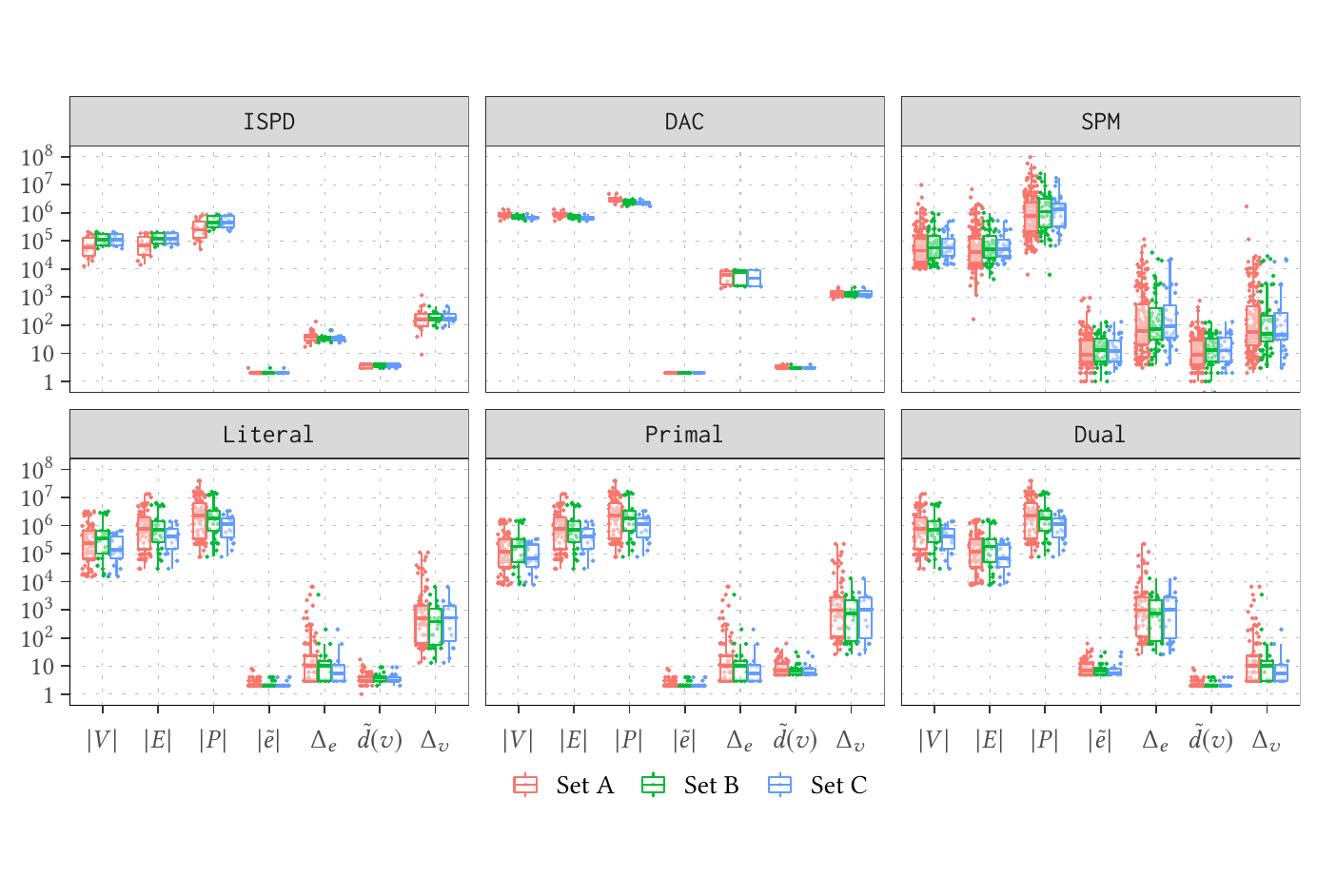}
  \else
    \tikzsetnextfilename{pdf_plots/benchmark_stats_per_type}%
    \input{tikz_plots/benchmark_stats_per_type}%
  \fi

	\vspace{-1.5cm}
  \caption{Summary of different properties for benchmark set A, B and C and the different sources. It shows for each
           hypergraph (points), the number of vertices $|V|$, nets $|E|$ and pins $|P|$, as well as the median and maximum
           net size ($\medsize$ and $\maxsize{e}$ and vertex degree ($\meddeg$ and $\maxsize{v}$).}
	\label{fig:benchmark_set}
\end{figure*}










\end{document}
\endinput
¨